\def\etal{\emph{et al.}}
\newcounter{listcounter}
\renewcommand{\thelistcounter}{\roman{listcounter}}
\newcommand{\descr}{\begin{list}{(\thelistcounter)}
{\usecounter{listcounter}
\setlength{\rightmargin}{0mm}}}
\newtheorem{lemma}{Lemma}[section]
\newtheorem{theorem}[lemma]{Theorem}
\newtheorem{claim}[lemma]{Claim}
\newtheorem{fact}[lemma]{Fact}
\newif\ifpfsymb
\newcounter{algo}
\def\thealgo{\@arabic\c@algo}
\def\fps@algo{tbp}
\def\ftype@algo{1}
\def\ext@algo{loa}
\def\fnum@algo{Algorithm \thealgo}
\def\algo{\@float{algo}}
\def\endalgo{\end@float}
\def\remlab#1{\@bsphack\if@filesw {\let\thepage\relax
   \def\protect{\noexpand\noexpand\noexpand}%
\xdef\@gtempa{\write\@auxout{\string
	   \newlabel{rem:#1}{{\thelemma}{\thepage}}}}}\@gtempa
            \if@nobreak \ifvmode\nobreak\fi\fi\fi\@esphack}
\def\deflab#1{\write\@auxout{\string
	\newlabel{def:#1}{{\thelemma}{\thepage}}}}
\gdef\setft#1#2#3{%
\def\@oddfoot{%
{\setbox0=\hbox{#1}%
\setbox1=\hbox{#3}%
\ifdim\wd0>\wd1%
\dimen0=\wd0%
\box0\hfil#2\hfil\hbox to\dimen0{\hfil\hfil\box1}%
\else \dimen0=\wd1%
\hbox to\dimen0{\box0\hfil}\hfil#2\hfil\box1\fi%
}}}}
\gdef\sethd#1#2#3{%
\def\@oddhead{\vbox{\hbox to\hsize{{#1}\hfil{#2}\hfil{#3}}%
\vspace{0.06in}%
\hbox to \hsize{\hrulefill}\vspace*{-0.09in}}}
\def\@evenhead{\@oddhead}
	}
\def\mysecn#1{\setcounter{equation}{0}
\section*{#1}\mark{#1}}
\def\thebibliography#1{\mysecn{References}
\addcontentsline{toc}{section}{References}\list
{[\arabic{enumi}]}{\settowidth\labelwidth{[#1]}\leftmargin\labelwidth
 \advance\leftmargin\labelsep
 \usecounter{enumi}}
 \def\newblock{\hskip .11em plus .33em minus .07em}
 \sloppy\clubpenalty4000\widowpenalty4000
 \sfcode`\.=1000\relax
 \small}
\def\complaint#1{}
\def\withcomplaints{
\newcounter{mycomplaints}
\def\complaint##1{\refstepcounter{mycomplaints}%
\ifhmode%
\unskip%
{\dimen1=\baselineskip \divide\dimen1 by 2 %
\raise\dimen1\llap{\tiny -\themycomplaints-}}\fi%
\marginpar{\tiny [\themycomplaints]: ##1}}%
}
\newcounter{printertype}
\def\figprint#1{
        \ifcase \theprintertype

		\begin{center}
                 \input{#1}
		\end{center}
              \or
                 \centerline{\psfig{figure=#1.ps}}
              \else
                 \vspace*{1in}
        \fi}
\long\def\@myfootnotetext#1{\insert\footins{\footnotesize
    \interlinepenalty\interfootnotelinepenalty 
    \splittopskip\footnotesep
    \splitmaxdepth \dp\strutbox \floatingpenalty \@MM
    \hsize\columnwidth \@parboxrestore
   \edef\@currentlabel{\csname p@footnote\endcsname\@thefnmark}\@makemyfntext
    {\rule{\z@}{\footnotesep}\ignorespaces
      #1\strut}}}
\def\myfootnotetext{\@ifnextchar
[{\@xfootnotenext}{\xdef\@thefnmark{\thempfn}\@myfootnotetext}}
\long\def\@makemyfntext#1{\parindent 5mm #1}
\newcounter{proof}
\def\@@meqncr{\let\@tempa\relax
    \ifcase\@eqcnt \def\@tempa{& & &}\or \def\@tempa{& &}
      \else \def\@tempa{&}\fi
     \@tempa $\Box$\addtocounter{proof}{-1}
     \global\@eqnswtrue\global\@eqcnt\z@\cr}
\def\mequation{$$\global\@ignoretrue}
\def\Ex{\ensuremath{\textbf{E}}}
\def\D{\ensuremath{\mathcal{D}}}
\def\T{\ensuremath{\mathcal{T}}}
\def\S{\ensuremath{\mathcal{S}}}
\def\eps{\ensuremath{\epsilon}}
\def\opt{{\textsc{Opt}}}
\def\cs{\ensuremath{(3.1)}}
\def\css{\ensuremath{(2.14)}}
\newtheorem{corollary}{Corollary}[section]
\renewcommand{\Re}{\mathbb{R}}
\newcommand{\remove}[1]{}
\newcommand{\hide}[1]{#1}
\newsavebox{\smallProofsym}                            
\begin{document}
        
\title{Tighter Estimates  for $\eps$-nets for Disks}
\date{}
 
\author{ 
        Norbert Bus \\ Universit\'e Paris-Est, \\ Laboratoire d'Informatique Gaspard-Monge, \\ Equipe A3SI, ESIEE Paris. \\ busn@esiee.fr \and
        Shashwat Garg \\ IIT Delhi \\ garg.shashwat@gmail.com \and
                Nabil H. Mustafa \\ Universit\'e Paris-Est, \\ Laboratoire d'Informatique Gaspard-Monge, \\ Equipe A3SI, ESIEE Paris.\\ mustafan@esiee.fr \and
        Saurabh Ray \\  Computer Science, New York University, Abu Dhabi. \\ saurabh.ray@nyu.edu}
 
\maketitle
 \thispagestyle{empty}
%




\begin{abstract}

The geometric hitting set problem is one of the basic
geometric combinatorial optimization problems: given
a set $P$ of points, and a set $\D$ of geometric objects
in the plane, the goal is to compute a small-sized
subset of $P$ that hits all objects in $\D$. 
In 1994, Bronniman and Goodrich~\cite{BG95} made an important
connection of this problem to the size of fundamental combinatorial structures
called $\eps$-nets, showing that small-sized $\eps$-nets
imply approximation algorithms with correspondingly small
approximation ratios. Very recently, Agarwal-Pan~\cite{AP14}
showed that their scheme can be implemented in near-linear
time for disks in the plane. Altogether this gives
$O(1)$-factor approximation algorithms in $\tilde{O}(n)$ time
for hitting sets for disks in the plane.

This constant factor depends on the sizes of $\eps$-nets for disks;
unfortunately, the current state-of-the-art bounds  
are large -- at least $24/\eps$ and most likely  larger 
than $40/\eps$. Thus the approximation factor of the Agarwal-Pan algorithm
ends up being more than $40$. The best lower-bound
is $2/\eps$, which follows from the Pach-Woeginger construction~\cite{PW}
for halfspaces in two dimensions. Thus there is a large gap
between the best-known upper and lower bounds.
Besides being of independent interest, finding
precise bounds is important since this immediately
implies an improved linear-time algorithm for the hitting-set problem.

The main goal of this paper is to improve the upper-bound
to $13.4/\eps$ for disks in the plane.
The proof is constructive, giving a simple algorithm that  uses only Delaunay triangulations.
We have implemented the algorithm, which is available as a public open-source module.
Experimental results show that the
sizes of $\eps$-nets for a variety of data-sets is lower,
around $9/\eps$.


\end{abstract} 


\section{Introduction}
The minimum hitting set problem is one of the most fundamental combinatorial optimization
problems: given a range space $(P, \D)$ consisting of a set $P$ and a set $\D$ of subsets of $P$ called the \emph{ranges},   the task is to compute the smallest subset $Q \subseteq P$ that has a non-empty 
intersection with each of the ranges in $\D$. This problem is  strongly NP-hard.
If there are no restrictions on the set system $\D$, then it is known that it is NP-hard 
to approximate the minimum hitting set within a logarithmic factor of the optimal~\cite{RS97}.
The problem is NP-complete even for the case where each range has exactly two points since this problem is equivalent to the vertex cover problem which is known to be NP-complete ~\cite{Karp72,GJ79}. 
A natural occurrence of the hitting set problem occurs when the range space $\D$ 
is derived from geometry -- e.g., given a set $P$ of $n$ points in $\Re^2$, 
and a set $\D$ of $m$ triangles containing points of $P$, compute the minimum-sized subset of $P$ that 
hits all the triangles in $\D$. 
Unfortunately, for most natural geometric range spaces, computing
the minimum-sized hitting set remains NP-hard. For example, even the (relatively) simple case where $\D$ is a
set of unit disks in the plane is strongly NP-hard~\cite{HM87}. 
Therefore fast algorithms for computing provably good approximate hitting sets for geometric
range spaces have been intensively studied for the past three decades (e.g., see 
the two recent PhD theses on this topic~\cite{F12, G11}). 

The case studied in this paper -- hitting sets for disks in the plane -- 
has been the subject of a long line of research. 
The case when all the disks have the same radius is easier, and has
been studied in a series of works:
C\u{a}linsecu \etal~\cite{CMWZ04} proposed a $108$-approximation
algorithm, which was subsequently improved by Ambhul \etal~\cite{AEMN06} to $72$.
Carmi \etal~\cite{CKL07} further improved that to a $38$-approximation
algorithm, though with the running time of $O(n^6)$.
Claude \etal~\cite{CDDDFLNS10} were able to achieve a $22$-approximation
algorithm running in time $O(n^6)$. More recently
Fraser \etal~\cite{DFLN12} presented a $18$-approximation algorithm
in time $O(n^2)$.

So far, besides ad-hoc approaches, there are two systematic lines
along which all progress on the hitting-set problem for geometric
ranges has relied on: rounding via $\eps$-nets, and local-search.
The local-search approach starts
with any hitting set $S \subseteq P$, and repeatedly decreases the size of $S$, if possible,
by replacing $k$ points of $S$ with $\leq k-1$ points of $P \setminus S$. Call
such an algorithm a $k$-local search algorithm.
It has been shown~\cite{MR10} that a $k$-local search algorithm for the hitting
set problem for disks in the plane gives a PTAS. Unfortunately the running time of their algorithm to compute a $(1+\eps)$-approximation is $O(n^{O(1/\eps^2)})$.
Very recently Bus \etal~\cite{BGMR15} were able to improve the analysis
and algorithm of the local-search approach to design a 
$8$-approximation running in time $O(n^{2.33})$. However, at this moment, a near-linear
time algorithm based on local-search seems beyond reach.
We currently do not even know how to compute the most trivial case, namely when $k=1$,
of local-search in near-linear time: given the set of disks $D$, and a set
of points $P$, compute a \emph{minimal} hitting set in $P$ of $D$.

  

\paragraph{Rounding via $\eps$-nets.}
Given a range space $(P, \D)$ and a parameter $\eps>0$, an $\eps$-net is a subset $S \subseteq P$ such that
$D \cap S \neq \emptyset$ for all $D \in \D$ with $|D \cap P| \geq \eps n$. The famous ``$\eps$-net theorem'' 
of Haussler and Welzl~\cite{HW87} states that for range spaces with  VC-dimension $d$,
there exists an $\eps$-net of size $O(d/\eps \log d/\eps)$ (this bound was later improved to $O(d/\eps \log 1/\eps)$,
which was shown to be optimal in general~\cite{PA95, M02}).  
\hide{
Sometimes, weighted versions of the problem are considered in which each $p\in P$ has some positive weight associated with it so that the total weight of all elements of $P$ is $1$. The weight of each range is the sum of the weights of the elements in it. The aim is to hit all ranges with weight more than $\epsilon$.
The condition of having finite $VC$-dimension is satisfied        
by many geometric set systems: disks, half-spaces, $k$-sided polytopes, $r$-admissible set of
regions etc. in $\Re^d$. }
For certain range spaces, one can even show the existence of $\eps$-nets of size $O(1/\eps)$ -- an important case being for disks in $\Re^2$~\cite{PR08}.
 
In 1994, Bronnimann and Goodrich~\cite{BG95} proved the following interesting
connection between the hitting-set problem, and $\eps$-nets: let $(P, \D)$ be a range-space 
for which we want to compute a minimum hitting set. If one can compute an $\epsilon$-net of size $c/\epsilon$ for the $\epsilon$-net problem for $(P,\D)$ in polynomial time, 
then one can compute
a hitting set of size at most $c \cdot \opt$ for $(P,\D)$, where
$\opt$ is the size of the optimal (smallest) hitting set, in polynomial time. 
\hide{A shorter, simpler proof   
was given by Even \emph{et al.}~\cite{ERS05}. }
\hide{Both these proofs construct
an assignment of weights to points in $P$ such that the total weight of each
range  $D \in \D$ (i.e., the sum of the weights of the points in $D$) is at least $(1/\opt)$-th fraction of the total weight. Then
a $(1/\opt)$-net with these weights is a hitting set.}
Until very recently, the best such rounding algorithms had running times 
of $\Omega(n^2)$, and it had been a long-standing open problem to compute a $O(1)$-approximation
to the hitting-set problem for disks in the plane in near-linear time.
In a recent break-through, Agarwal-Pan~\cite{AP14} presented an algorithm
that is able to do the required rounding efficiently   for a broad set of geometric objects.
In particular, they are able to get the first near-linear algorithm for computing
$O(1)$-approximations for hitting sets for disks.

\paragraph{Bounds on $\eps$-nets.} 
The result of Agarwal-Pan~\cite{AP14} opens the way, for the first time,
for near linear-time algorithms for the geometric hitting set problem. The catch
is that the approximation factor depends on the sizes of $\eps$-nets for disks;
despite over 7 different proofs of $O(1/\eps)$-sized $\eps$-nets for disks, the
precise bounds are not very encouraging. The paper containing the earliest proof,  Matousek \etal~\cite{MSW90},  was over twenty-two years ago and thus summarized their result:

\textit{``Note that in principle the $\eps$-net construction presented in this paper can be
transformed into a deterministic algorithm that runs in polynomial time, $O(n^3)$
at worst. However, we certainly would not advocate this algorithm 
as being practical. We find the resulting constant of proportionality also not
particularly flattering.''}~\cite{MSW90}

So far, the best constants for the $\epsilon$-nets come from the proofs in \cite{PR08} and \cite{HPKSS14}. The latter paper presents five proofs for the existence of linear size $\epsilon$-nets for halfspaces in $\Re^3$. The best constant for disks is obtained by using their first proof. A lifting of the problem of disks to $\Re^3$ gives an $\epsilon$-net problem with lower halfspaces in $\Re^3$, for
which \cite{HPKSS14} obtains a bound of $\frac{4}{\epsilon} f(\alpha)$ where $\alpha < \frac{1}{3}$ and $f(\alpha)$ is the best bound on the size of an $\alpha$-net for lower halfspaces in $\Re^3$. Using the lower bound of \cite{PW} for halfspaces in $\Re^2$, $f(\alpha) \geq \lceil 2/\alpha \rceil -1 \geq 6$, although we believe that it is at least $10$ since even for $\epsilon = 1/2$, no $\epsilon$-net construction of size less than $10$ is known. Thus, the best constructions so far give a bound that is at least  $24/\epsilon$ and most likely more than $40/\epsilon$. 
Furthermore,
there is no implementation or software solution available that can even compute
such $\eps$-nets efficiently.

\subsection*{Our Contributions}





We prove new improved bounds on sizes of $\eps$-nets and present efficient algorithms
to compute such nets. 
Our approach is simple: we will show that modifications to a well-known technique
for computing $\eps$-nets -- the sample-and-refine approach of Chazelle-Friedman~\cite{CF90}
-- together with additional structural properties of Delaunay triangulations in fact results in $\eps$-nets of surprisingly low size:
 
\begin{theorem}
\label{thm:epsnets}
Given a set $P$ of $n$ points in $\Re^2$, there exists an $\eps$-net
under disk ranges of size at most $13.4/\eps$. Furthermore it can
be computed in expected time $O(n \log n)$.
\end{theorem}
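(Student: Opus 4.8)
The plan is to follow the sample‑and‑refine paradigm of Chazelle--Friedman~\cite{CF90}, but to use the Delaunay triangulation itself as the canonical decomposition of the plane induced by the sample, and then to squeeze the constants in the accounting as far down as possible. \textbf{Step 1 (sample).} Draw $R\subseteq P$ by keeping each point independently with probability $\rho=t/(\eps n)$, where $t$ is a parameter fixed at the very end, so $\Ex|R|=t/\eps$. We will \emph{not} rely on $R$ alone being an $\eps$‑net (that would cost a $\log(1/\eps)$ factor); instead every heavy disk missed by $R$ is caught by an explicit refinement.

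\textbf{Step 2 (Delaunay localization lemma).} Build $\mathrm{DT}(R)$. For a Delaunay triangle $\Delta$ let $C_\Delta$ be its circumdisk (which is empty of $R$) and $P_\Delta=C_\Delta\cap P$ its conflict list; for a hull edge of $R$ treat the outer halfplane the same way. The structural heart of the proof is the claim that if a disk $D$ satisfies $D\cap R=\emptyset$, then there is a Delaunay edge $uv$ (or hull edge) such that $D\cap P$ is contained in the union of the at most two circumdisks incident to $uv$. The sketch: continuously grow $D$ while keeping it empty of $R$ until its boundary passes through two sample points $u,v$; any such pair is a Delaunay edge, and every empty disk through $u,v$ lies in the union of the circumdisks of the one or two Delaunay triangles sharing $uv$ (this is exactly where the empty‑circle property of Delaunay triangulations does the work that a generic vertical decomposition cannot). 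Consequently, if $|D\cap P|\ge\eps n$ then one of those two circumdisks, say $C_\Delta$, already has $|D\cap P\cap C_\Delta|\ge\eps n/2$; since $D\cap P\cap C_\Delta$ is just a disk range of the subset $P_\Delta$, it suffices to stab, inside each heavy $P_\Delta$, every disk range of $P_\Delta$ carrying at least $\eps n/2$ of its points.

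\textbf{Step 3 (refine and recurse; running time).} Set $N=R$, call a cell \emph{heavy} if $|P_\Delta|\ge\eps n/2$, and for each heavy cell recursively compute and add an $\eps_\Delta$‑net of $(P_\Delta,\text{disks})$ with $\eps_\Delta=\eps n/(2|P_\Delta|)\le 1$; by Step 2, $N$ is an $\eps$‑net. To bound $|N|$, invoke the Clarkson--Shor exponential‑decay lemma: because $\mathrm{DT}(R)$ has only $O(|R|)$ cells and disks have linear union complexity, the expected number of cells with $|P_\Delta|\ge \lambda n/|R|$ is $O(|R|\,2^{-\lambda})$. Plugging this into the recursion — a cell with $|P_\Delta|\approx 2^{j}\eps n/t$ contributes, by induction on $|P|$, a net of size $\approx c\cdot 2^{j+1}/t$ — makes $\Ex|N|$ a \emph{convergent} geometric‑type series $\frac t\eps+\frac1\eps\sum_j(\text{decaying})$ with no $\log(1/\eps)$ term, precisely because the decay above is strong. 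Optimizing $t$, the heaviness threshold, the $O(1)$‑size nets used as base cases, and the handling of the few hull/boundary cells, yields $\Ex|N|\le 13.4/\eps$, so a net of that size exists. For the time bound: $\mathrm{DT}(R)$ costs $O(|R|\log|R|)$, the conflict lists are obtained by planar point location in $O(n\log n)$, and since conflict‑list sizes shrink geometrically in expectation between recursion levels, the total work telescopes to $O(n\log n)$.

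The main obstacle is twofold. First, proving the localization lemma with the sharp ``at most two circumdisks'' conclusion rather than the naive ``$O(1)$ intersected Delaunay triangles,'' which is actually false — a round disk empty of $R$ can cross $\Theta(|R|)$ Delaunay triangles — so one really must route the argument through a Delaunay \emph{edge} and its empty circumdisks. Second, the quantitative bookkeeping in Step 3: the generic Chazelle--Friedman analysis only yields a large $O(1/\eps)$, so every constant (sample rate $t$, heaviness threshold, the decay base, the base‑case net size, the treatment of unbounded cells) must be tracked and jointly optimized, and the gap between the provable $13.4$ and the empirical $\approx 9$ shows how little slack the calculation leaves.
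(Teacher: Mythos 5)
Your overall architecture is the paper's: sample at rate $\Theta(1/\eps n)$, build the Delaunay triangulation of the sample, localize any heavy disk missed by $R$ by growing it until it passes through two sample points and observing that it is covered by the two circumdisks incident to that Delaunay edge, then recurse on conflict lists and control the bookkeeping with Clarkson--Shor exponential decay. That localization step and the correctness argument are exactly right. The gap is quantitative, and it is precisely the point the paper flags as its first key idea: after localizing to an edge $e$, you retreat to \emph{per-triangle} subproblems by pigeonholing ($|D\cap P\cap C_\Delta|\ge \eps n/2$ for one of the two circumdisks), so your heaviness threshold drops to $\eps n/2$ and every cell with conflict list of size $\ge \eps n/2$ must be refined. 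The paper instead keeps the subproblem attached to the edge itself, setting $P_e = P_{\Delta^1_e}\cup P_{\Delta^2_e}$ and $\eps' = \eps n/|P_e|$, so nothing is paid until $|P_e|$ exceeds $\eps n$ and no factor of $2$ is lost. In the Chazelle--Friedman accounting this halving is exponentially expensive: the expected number of cells at threshold $\eps n/2$ is roughly $e^{c_1/2}$ times the number at threshold $\eps n$, and a back-of-envelope optimization of your scheme (even re-tuning the sampling rate $t$) lands in the low twenties times $1/\eps$, not $13.4/\eps$. Your closing sentence ``optimizing $t$, the heaviness threshold, \dots yields $13.4/\eps$'' is therefore an assertion the rest of your argument cannot deliver; the counting must be done over Delaunay \emph{edges} (Delaunay quadruples/triples of weight $\le k$, which number $O(nk^3)$ and $O(nk^2)$), as in the paper.

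A second missing ingredient is the paper's other key: explicit small nets for large $\eps$ used as base cases. You invoke unspecified ``$O(1)$-size nets,'' but the constant $13.4$ depends on having $f(\eps)\le 2$ for $\eps\in(2/3,1)$ (Lemma~\ref{lemma:2}) and $f(\eps)\le 10$ for $\eps\in(1/2,2/3]$ (Lemma~\ref{lemma:10}, proved via a ham-sandwich quadrant construction plus two Delaunay vertices); without exhibiting these, even the edge-based recursion does not close to $13.4$. Your running-time sketch (point location plus breadth-first conflict-list construction, geometric decay across levels) is consistent with the paper and fine.
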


A major advantage of Delaunay triangulations is that their
behavior has been extensively studied, there are many efficient
implementations available, and they exhibit good behavior for various real-world data-sets as well as random point sets.
The algorithm, using \textsc{CGAL}, is furthermore simple to implement. We have implemented it, and present
the sizes of $\eps$-nets for various real-world data-sets; the results
indicate that our theoretical analysis closely tracks the actual size of the nets.
This can additionally be seen as continuing the program for
better analysis of basic geometric tools; see, e.g.,
Har-Peled~\cite{HP00} for analysis of algorithms and 
Matousek~\cite{M98} for detailed analysis, both for a related
structure called cuttings in the plane.

Together with the result of Agarwal-Pan, this immediately implies the following:

\begin{corollary}
For any $\delta > 0$, one can compute a $(13.4+\delta)$-approximation to the minimum hitting
set for $(P, \D)$ in time $\tilde{O}(n)$.
\end{corollary}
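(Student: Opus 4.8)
The plan is to simply combine the two ingredients that the excerpt has already put on the table: Theorem~\ref{thm:epsnets}, which produces an $\eps$-net of size at most $13.4/\eps$ for disk ranges in expected time $O(n\log n)$, and the Agarwal--Pan rounding scheme~\cite{AP14}, which takes as a black box any near-linear-time algorithm that, given a weighted point set, computes an $\eps$-net of size $c/\eps$, and turns it into a near-linear-time algorithm computing a hitting set of size at most $(c+\delta)\cdot\opt$ for any fixed $\delta>0$. The first step is to recall the Bronnimann--Goodrich reweighting framework underlying \cite{AP14}: one maintains multiplicative weights on the points of $P$, and in each round computes a $(1/(2\opt))$-net (or, in the Agarwal--Pan version, a slightly relaxed net on a sample) with respect to the current weights; any disk not hit by the net has its points' weights doubled; after $O(\opt\log(n/\opt))$ rounds every disk is hit, and the union of the nets found has size $O(c\cdot\opt)$. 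Since $\opt$ is not known in advance, one runs this with a guessed value and doubles the guess until it succeeds, losing only a constant factor, or uses the refined bookkeeping of \cite{AP14} to get exactly $(c+\delta)\opt$.

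The second step is to check that Theorem~\ref{thm:epsnets} supplies exactly the subroutine Agarwal--Pan require. Their reduction needs, for disks, an algorithm that on an $n$-point weighted instance returns an $\eps$-net of size $c/\eps$ in time $\tilde O(n)$; a standard observation is that a weighted $\eps$-net problem reduces to an unweighted one by replacing each point of weight $w_p$ with $\lceil w_p/(\eps W/k)\rceil$ co-located copies for a suitable granularity (or, more efficiently, by the sampling step that is already internal to \cite{AP14}, so that our algorithm is only ever invoked on an \emph{unweighted} point set of near-linear size). In either case the net size is $13.4/\eps$ and the running time is $O(n\log n)$, so $c = 13.4$. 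Plugging $c=13.4$ into the Agarwal--Pan guarantee yields a hitting set of size at most $(13.4+\delta)\opt$ computed in $\tilde O(n)$ time, which is precisely the statement.

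The third step is just to confirm the error and time accounting: the additive $\delta$ in the approximation factor is exactly the slack term in \cite{AP14} (it arises from the relaxation in which the sampled net is required to hit only disks of weight slightly above the threshold, together with the geometric search over guesses of $\opt$), and the $\tilde O(n)$ running time follows because there are $\mathrm{polylog}(n)$ net computations, each on an instance of size $\tilde O(n)$ and each costing $O(n\log n)$ by Theorem~\ref{thm:epsnets}, plus the $\tilde O(n)$ overhead of the reweighting data structures of \cite{AP14} (efficient range-searching structures for disks to detect an unhit disk of large weight). There is no genuine obstacle here: the only point requiring a line of care is that our $\eps$-net algorithm is stated for a plain point set, so one must explicitly note that it slots into the unweighted net-finding slot of the Agarwal--Pan pipeline (their sampling having already removed the weights); once that is observed, the corollary is immediate.
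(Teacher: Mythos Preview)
Your proposal is correct and matches the paper's approach: the paper does not give a separate proof of the corollary at all, simply stating that Theorem~\ref{thm:epsnets} ``together with the result of Agarwal--Pan, this immediately implies'' the corollary. Your write-up spells out the mechanics of that combination (the reweighting framework, the role of the $\eps$-net subroutine, and the source of the additive $\delta$) in more detail than the paper bothers to, but the underlying argument is the same black-box plug-in of the $13.4/\eps$ net bound into the Agarwal--Pan scheme.
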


\section{A near linear time algorithm for computing $\eps$-nets for disks in the plane}
\label{sec:epsnets}

Through a more careful analysis, we present an algorithm for computing an $\eps$-net of size $\frac{13.4}{\eps}$, running in near linear time. The method, shown in Algorithm~\ref{alg:epsnets}, computes a random sample and then solves certain subproblems involving subsets   located in pairs of Delaunay disks circumscribing adjacent triangles in the Delaunay triangulation of the random sample. The key to improved bounds is $i)$ considering edges in the Delaunay triangulation instead of faces in the analysis, and $ii)$ new improved constructions for large values of $\eps$.


Let $\Delta(abc)$ denote the triangle
defined by the three points $a$, $b$ and $c$. 
$D_{abc}$ denotes the disk through $a$, $b$ and $c$, 
while $D_{ab\overline{c}}$ denotes the halfspace defined by $a$ and $b$ not containing
the point $c$. Let $c(D)$ denote the center of the disk $D$.

Let $\Xi(R)$ be the Delaunay triangulation of a set of points $R \subseteq P$
in the plane. We will use $\Xi$ when $R$ is clear from the context.
For any triangle $\Delta \in \Xi$, let  $D_{\Delta}$
be the Delaunay disk of $\Delta$, and let $P_{\Delta}$ be  the
set of points of $P$ contained in $D_{\Delta}$. Similarly, 
for any edge $e \in \Xi$, let $\Delta^1_e$ and $\Delta^2_e$ be
the two triangles in $\Xi$ adjacent to $e$, and $P_e = P_{\Delta^1_e} \bigcup P_{\Delta^2_e}$. If $e$ is on the convex-hull, then one of the triangles is taken to be
the halfspace defined by $e$ not containing $R$.

\IncMargin{1em}
\begin{algorithm}[!h!]                            
\DontPrintSemicolon
\KwData{Compute $\eps$-net, given $P$: set of $n$ points in $\Re^2$, $\eps > 0$ and $c_1$.}
 \BlankLine
 \If{$\eps n < 13$}{Return $P$}
 Pick each point $p \in P$ into $R$ independently with probability $\frac{c_1}{\eps n}$.\;
 \If{$|R| \leq c_1/2\eps$}{restart algorithm.}
 Compute the Delaunay triangulation $\Xi$ of $R$.\;
 \For{triangles $\Delta \in \Xi$}{
    Compute the set of points $P_{\Delta} \subseteq P$ in Delaunay disk $D_{\Delta}$ of $\Delta$.\;
 }
 \For{edges $e \in \Xi$}{
    Let $\Delta^1_{e}$ and $\Delta^2_{e}$ be the two triangles adjacent to $e$, $P_e = P_{\Delta^1_e} \cup P_{\Delta^2_e} $.\;
    Let $\eps' = (\frac{\eps n}{|P_e|})$ and compute a $\eps'$-net $R_{e}$ for $P_e$ depending on the cases below:\;
    \If{$\frac{2}{3} < \eps' < 1$}{compute using Lemma~\ref{lemma:2}.}
    \If{$\frac{1}{2} < \eps' \leq \frac{2}{3}$}{compute    using Lemma~\ref{lemma:10}.}
    \If{$\eps' \leq \frac{1}{2}$}{compute   recursively.}
  }
  \textbf{Return} $\left(\bigcup_{e} R_e\right) \cup R$.\;
\caption{Compute $\eps$-nets \label{alg:epsnets}}
\end{algorithm}
\DecMargin{1em}

In order to prove that the algorithm gives the desired result, the following theorems regarding the size of an $\eps$-net will be useful. Let $f(\eps)$ be the size
of the smallest $\eps$-net for any set $P$ of points in $\Re^2$ under
disk ranges. 

\begin{lemma}[\cite{AAG14}]
\label{lemma:2}
For $\frac{2}{3} < \eps < 1$, $f(\eps) \leq 2$, and can be computed in $O(n\log n)$  time.
\end{lemma}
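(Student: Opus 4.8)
Since $\eps>2/3$, an $\eps$-net only has to stab the disks $D$ with $|D\cap P|\ge\eps n>\tfrac23 n$, so it suffices to exhibit $p_1,p_2\in P$ such that every disk $D$ with $p_1\notin D$ and $p_2\notin D$ has $|D\cap P|\le\tfrac23 n$ (a halfplane being treated as a degenerate disk). The first step is a reduction: by continuously growing and sliding any disk that avoids $\{p_1,p_2\}$ one reaches an inclusion‑maximal such disk, whose bounding circle passes through both $p_1$ and $p_2$ (the limiting case being the halfplane bounded by the line $p_1p_2$); hence it is enough to guarantee that for \emph{every} circle $C$ through $p_1$ and $p_2$ the open disk bounded by $C$ contains at most $\tfrac23 n$ points of $P$, and similarly for the halfplanes bounded by the line $p_1p_2$. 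The key structure to exploit is that the circles through two fixed points form a pencil.

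Next I would linearize this pencil. Apply the inversion $\iota$ centred at $p_1$ (equivalently, lift $P$ to the paraboloid in $\Re^3$ and use the standard correspondence between disks and halfspaces). Then $P\setminus\{p_1\}$ maps to a planar point set $P'$, every circle through $p_1$ maps to a line, the open disk bounded by such a circle maps to an open halfplane, and the circles through \emph{both} $p_1$ and $p_2$ map precisely to the lines through the single point $p_2':=\iota(p_2)\in P'$. Under this correspondence the requirement becomes a one‑sided centerpoint (Tukey–depth) condition on $p_2'$: for a suitable orientation of the plane, \emph{every} line through $p_2'$ leaves at least $\tfrac13 n-O(1)$ points of $P'$ on the prescribed side. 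So after fixing $p_1$, the task is to locate a point $p_2'\in P'$ with this ``$\tfrac13$‑depth''.

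The delicate part, which I expect to be the real work, is to get the constant exactly $1/3$ (so that the threshold $\tfrac23 n$ comes out) while keeping the deep point \emph{inside} $P'$, i.e.\ keeping $p_1,p_2\in P$ rather than choosing a free point of the plane: a black‑box centerpoint of the lifted set in $\Re^3$ only yields the weaker threshold $\tfrac34 n$ and does not lie on $P$. One must exploit the remaining freedom in the choice of $p_1$ (and $p_2$): they should be picked on the convex hull so that the pencil/inversion is well behaved, yet not so that a nearly‑whole halfplane avoids both points (which would be heavy) — a balanced, ``well‑spread'' pair. With such a choice the relevant orientation is pinned down, $p_2'$ can be selected by a one‑dimensional median‑type selection in the appropriate angular order around the image of $p_1$, and the final step is to verify that every line through this $p_2'$ keeps at least $\tfrac13 n-O(1)$ points of $P'$ on the correct side. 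Verifying this single choice against \emph{all} lines, with the tight constant, is the crux; it genuinely uses the pencil geometry of disks, since for an abstract range space in which any two $\tfrac23 n$‑heavy ranges meet in more than $\tfrac13 n$ elements there need be no $2$‑element hitting set (e.g.\ all $(2k)$‑subsets of a $3k$‑element ground set).

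Finally, for the running time: a hull vertex, the inversion, and the construction of $P'$ take $O(n\log n)$ time; the angular sort and the rank selection of $p_2'$ take $O(n\log n)$ time; the verification is part of the correctness argument and is not executed. Hence the pair $\{p_1,p_2\}$, and therefore the $\eps$-net of size at most $2$, is produced in $O(n\log n)$ time, as claimed.
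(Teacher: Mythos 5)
You have not given a proof; you have given a plan whose decisive step is left open, and that step is exactly the content of the lemma. Everything up to the depth reformulation is routine: reducing to inclusion-maximal disks, inverting at $p_1$ so that the pencil of circles through $p_1,p_2$ becomes the pencil of lines through $p_2'=\iota(p_2)$, and restating the requirement as ``every line through $p_2'$ keeps roughly $n/3$ points of $P'$ on the side of $p_1$.'' What the lemma actually asserts is that some pair $p_1,p_2\in P$ satisfies this, and on that point you offer only ``a balanced, well-spread pair'' chosen ``on the convex hull,'' followed by your own admission that verifying the depth condition against all lines ``is the crux.'' No rule for selecting $p_1$ is given, no proof that for that $p_1$ a point $p_2'\in P'$ of one-sided depth $n/3$ exists (this is not a consequence of the ordinary centerpoint theorem, since the deep point must come from $P'$ and the depth is one-sided with respect to an external reference point), and no verification step. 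In addition, your reduction is incomplete: an inclusion-maximal disk avoiding $\{p_1,p_2\}$ need not have both points on its boundary — growing a disk that avoids both can terminate in a halfplane whose boundary contains only one of them (or neither), not just ``the halfplane bounded by the line $p_1p_2$.'' After inversion at $p_1$ those ranges correspond to halfplanes bounded by lines through $p_1$, not through $p_2'$, so besides the depth condition at $p_2'$ you also need the pair to be a $\tfrac23$-net for halfplanes; neither condition is established, and the interplay between them is precisely where the choice of the pair is constrained.

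For comparison: the paper itself does not prove this statement — it imports it from \cite{AAG14} — so there is no internal argument to measure yours against; but the known arguments of this type (and the sketch the paper's source gives for the companion Lemma~\ref{lemma:10}) work in the opposite direction from yours: they start from a point of halfplane depth $n/3$ (a centerpoint or ham-sandwich point), use convexity of disks to conclude that every disk with more than $\tfrac23 n$ points contains that point, and then trade the possibly non-input point for vertices of the Delaunay simplex of $P$ containing it, exploiting emptiness of Delaunay disks and the fact that two circles cross at most twice. That route pins down concretely which points of $P$ to output and immediately gives the $O(n\log n)$ bound. To salvage your inversion/pencil approach you would have to (i) specify the choice of $p_1$ and $p_2$, (ii) prove the one-sided depth-$n/3$ property for all lines through $p_2'$, and (iii) handle the halfplanes incident to only one of the two chosen points; as written, (i)–(iii) are missing, so the lemma remains unproved.
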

\begin{lemma}
\label{lemma:10}
 For $\frac{1}{2} < \eps \leq \frac{2}{3}$, $f(\eps) \leq 10$ and can be computed in $O(n\log n)$ time.
\end{lemma}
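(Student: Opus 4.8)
The plan is to produce $10$ points of $P$ that meet every disk containing more than $n/2$ points of $P$; since $\eps>1/2$, any disk with more than $\eps n$ points has more than $n/2$ points, so such a set is automatically an $\eps$-net for every $\eps\in(1/2,2/3]$ (the upper bound $2/3$ is not even needed for correctness, only to let several of the points below coincide). I would first record the one structural fact that governs the whole argument: if $D_1$ and $D_2$ each contain more than $n/2$ points of $P$, then $P\setminus(D_1\cap D_2)=(P\setminus D_1)\cup(P\setminus D_2)$ has fewer than $n$ points, so $D_1\cap D_2$ still contains a point of $P$; thus the ``heavy'' disks pairwise intersect (indeed within $P$), which is exactly why one point does not suffice and why a bounded number might. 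Two convenient reformulations: under the lifting $p\mapsto(p,\lVert p\rVert^2)$ the heavy disks become the lower halfspaces of $\Re^3$ containing more than half of the (convex-position) lifted set $\widehat P$, and dually we want $\widehat S\subseteq\widehat P$ with $\lvert\widehat S\rvert\le 10$ such that for every direction $u\in S^2$ at least one point of $\widehat S$ projects weakly above the median of $\{\langle\widehat p,u\rangle:\widehat p\in\widehat P\}$.

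For the construction I would combine two ideas. First, a centerpoint $m$ of $P$ in the plane lies in every disk containing more than $2n/3$ points, so it remains to handle the disks whose point-count lies in $(n/2,2n/3]$ and, separately, to replace $m$ by points of $P$ (we need a strong net, $\widehat S\subseteq P$). Such a ``shallow'' heavy disk $D$ misses $m$, hence lies in a halfplane $H\not\ni m$ with $\lvert H\cap P\rvert\le 2n/3$, inside which $D$ omits fewer than $n/6$ points; partitioning the plane into a bounded number of angular cones around $m$ and taking, in each cone carrying enough mass, a suitably ``deep along the cone axis'' point of $P$ covers all such $D$, and the same cone points together with a handful of extreme points of $P$ can be made to absorb the rounding of $m$. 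An alternative route uses that the heavy disks pairwise intersect and so, by the Danzer--Stach\'o piercing theorem, admit $4$ piercing points $x_1,\dots,x_4$ (not necessarily in $P$); inverting $P$ about each $x_i$ turns the heavy disks through $x_i$ into heavy halfplanes, for which an easy one-dimensional argument (the extreme point in a given direction hits every heavy halfplane of that orientation, and $O(1)$ well-spread extreme points cover all orientations) gives a strong net of $3$ points of $P$. Either route yields at most $\approx 12$, and the task is to shave this to $10$ by sharing points across the $x_i$, or by a more economical cone count, or by exploiting that near the top of the range $(1/2,2/3]$ the cone family collapses.

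For the running time: a planar centerpoint costs $O(n)$, or $O(n\log n)$ if one uses only an approximate centerpoint, which is all that is needed; the angular cones and the ``deepest point per cone'' require only sorting $P$ around $m$ and along the $O(1)$ cone axes, i.e.\ $O(n\log n)$; the inversion variant likewise only sorts $O(1)$ inverted copies of $P$; and assembling and verifying the output is subsumed by a single Delaunay triangulation in $O(n\log n)$.

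The hard part is the construction itself: choosing the cone directions, their number, and the tie-breaking rule for which point of $P$ to keep in each cone so that the depth that each kept point is guaranteed to have exactly covers the band of halfspaces left uncovered by the centerpoint, and so that the grand total is $10$ rather than a larger $O(1)$ --- this is the ``new construction for large $\eps$'' the paper advertises. Two features make it delicate: every selected point must lie in $P$, which is what forces the rounding-of-$m$ argument instead of simply quoting the centerpoint theorem or Danzer's theorem; and the bound $2/3$ must be used to make the construction uniform over the whole interval while still hitting exactly $10$.
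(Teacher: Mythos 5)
There is a genuine gap: neither of your two routes actually reaches the bound of $10$. Both are sketched only up to ``$\approx 12$,'' and you explicitly defer the reduction to $10$ (``the task is to shave this to $10$ by sharing points across the $x_i$, or by a more economical cone count\dots''), which is precisely the content of the lemma --- a proof of $f(\eps)\le 12$ or $f(\eps)=O(1)$ does not suffice here, since the constant $10$ is what gets plugged into the recursion of Lemma~\ref{lemma:epsnetproof}. Moreover, several intermediate steps are themselves shaky. In the inversion route, inverting about a piercing point $x_i$ turns a disk into a halfplane only when the disk's \emph{boundary} passes through $x_i$; a heavy disk containing $x_i$ in its interior maps to the complement of a disk, so the reduction to ``heavy halfplanes'' does not hold as stated, and since the Danzer--Stach\'o piercing points are not points of $P$ you still owe the conversion of each $x_i$ into few points of $P$ without blowing past $10$. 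In the centerpoint route, the crucial data (number of cones, the depth guarantee of the chosen point per cone, and why these points hit every disk with between $n/2$ and $2n/3$ points while also absorbing the rounding of the centerpoint into $P$) are exactly what determines the constant, and they are left unspecified. Your opening observations (it suffices to hit disks with more than $n/2$ points, and such disks pairwise intersect) are correct, but they only set the stage.

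For comparison, the paper's construction is short and different from both of your routes: choose two lines through a ham-sandwich point $q$ so that each of the four quadrants contains $n/4$ points of $P$; in each quadrant place the $2$-point $\frac{2}{3}$-net of Lemma~\ref{lemma:2} ($8$ points), and add the two vertices, lying in opposite quadrants, of the Delaunay triangle of $P$ containing $q$ (or $q$ itself if $q\in P$), for at most $10$ points in total. A disk missing all these points and meeting at most three quadrants contains at most $3\cdot\frac{2}{3}\cdot\frac{n}{4}=\frac{n}{2}$ points; a disk meeting all four quadrants must contain points of the two opposite quadrants lying outside the empty Delaunay disk of that triangle, and avoiding both added vertices would force it to pierce that empty disk, a contradiction. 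All steps (ham-sandwich cut, the nets from Lemma~\ref{lemma:2}, one Delaunay triangulation and point location) run in $O(n\log n)$ time. So the paper gets $10$ by a quadrant decomposition plus the $f(2/3)\le 2$ primitive, rather than by centerpoints, cones, or piercing-and-inversion.
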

\hide{
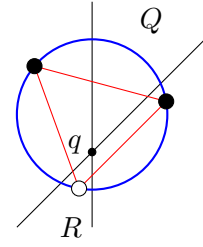
\begin{wrapfigure}{r}{0.3\textwidth}
\centering
\vspace{-0.6in}
\begin{tikzpicture}[scale=0.5]
\draw[black] (0,0) to (5,5);
\draw[black] (2,0) to (2,6);
\draw[blue, thick] (2,3) circle (2);
\node[black,left] at (2,2.2) {$q$};
\draw[black, fill=black] (2,2) circle (0.1);
\node[black,right] at (3,5.5) {$Q$};
\node[black,left] at (2,0.0) {$R$};
\draw[red] ($(2,3)+(10:2)$) to ($(2,3)+(140:2)$) to ($(2,3)+(260:2)$) -- cycle;
\draw[black, fill=black] ($(2,3)+(10:2)$) circle (0.2);
\draw[black, fill=black] ($(2,3)+(140:2)$) circle (0.2);
\draw[black, fill=white] ($(2,3)+(260:2)$) circle (0.2);
\end{tikzpicture}
\caption{Setup around $q$.}
\end{wrapfigure}
\emph{Proof.}
Divide the plane into $4$ quadrants with $2$ lines,  intersecting at a point $q$, such that each quadrant contains $n / 4$ points. Using the Ham-Sandwich theorem, this can be done in linear time~\cite{M02}. Create a $\frac{2}{3}$-net for each quadrant, using Lemma~\ref{lemma:2}. Add these $8$ points to the $\eps$-net of $P$. If $q\in P$ then add $q$ to the $\eps$-net; otherwise let $\Delta$ be the triangle in the Delaunay triangulation
of $P$ that contains the point $q$. Add the two vertices
of $\Delta$ that are in the opposite quadrants to the $\eps$-net. The resulting
size of the net is at most $10$.
Denote the quadrant without a vertex of the Delaunay triangle inside it by $Q$ and its opposite quadrant by $R$.
If a disk $D$ intersects at most $3$ quadrants and does not contain
any of the points from the $\frac{2}{3}$-net in each of those quadrants, it can contain only at most $3\cdot\frac{2}{3}\cdot\frac{n}{4}=\frac{n}{2}$ points.
On the other hand, if $D$ contains points from each of the $4$ quadrants, then it must contain points from $Q$ and $R$ that are outside of the Delaunay disk $D_{\Delta}$ of $\Delta$ (as $D_{\Delta}$  is empty of points of $P$). Then if $D$ does not contain
any of the two vertices of $\Delta$ in the opposite quadrants (already added
to the $\eps$-net), it must
pierce $D_{\Delta}$, a contradiction.
\qed
}

Call a tuple $(\{p,q\},\{r,s\})$, where $p,q,r,s \in P$, a \emph{Delaunay quadruple} if
$int(\Delta(pqr)) \cap int(\Delta(pqs)) = \emptyset$.
Define its \emph{weight}, denoted $W_{(\{p,q\},\{r,s\})}$, to be the number of points of $P$ in $D_{pqr} \cup D_{pqs}$.
Let $\T_{\leq k}$ be a set of Delaunay quadruples of $P$ of weight at most $k$ and similarly $\T_k$ denotes the set of Delaunay quadruples of weight exactly $k$.
Similarly, a \emph{Delaunay triple} is given by $(\{p,q\}, \{r\})$, where $p,q,r \in P$.
Define its \emph{weight}, denoted $W_{(\{p,q\},\{r\})}$, to be the number of points of $P$ in $D_{pqr} \cup D_{pq\overline{r}}$.
Let $\S_{\leq k}$ be a set of Delaunay triples of $P$ of weight at most $k$, and   $\S_k$ denotes the set of Delaunay triples of weight exactly $k$.

One can upper bound the size of $\T_{\leq k}$, $\S_{\leq k}$ and using it, we derive an upper bound on the expected number of sub-problems with a certain number of points.
 
\begin{claim}
\label{claim:cs}
$|\T_{\leq k}| \leq (e^3/9) nk^3$ asymptotically \text{ and } $|\T_{\leq k}| \leq \cs nk^3$ for $k\geq 13$.
\end{claim}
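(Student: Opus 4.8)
The natural route is the Clarkson--Shor sampling technique, with Delaunay triangulations supplying the ``canonical'' structures. First I would fix a probability $\rho\in(0,1)$ (to be chosen as a function of $k$) and let $R\subseteq P$ be the random sample obtained by taking each point of $P$ independently with probability $\rho$. Say that a Delaunay quadruple $(\{p,q\},\{r,s\})$ \emph{appears} in $\Xi(R)$ if $pq$ is an internal Delaunay edge of $R$ whose two adjacent triangles are $\Delta(pqr)$ and $\Delta(pqs)$; this happens exactly when $p,q,r,s\in R$ and no other point of $R$ lies in $D_{pqr}\cup D_{pqs}$. Hence, for a quadruple $Q$ of weight $w=W_Q$ that is realizable at all (i.e.\ for which $\Delta(pqr)$ and $\Delta(pqs)$ can simultaneously be Delaunay, equivalently neither of $r,s$ lies inside the other's circumdisk), one has $\Pr[Q\text{ appears in }\Xi(R)]=\rho^{4}(1-\rho)^{w}$: the four defining points must be sampled and the $w$ enclosed points must be missed.

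The combinatorial input --- and the one place where it is essential that we use Delaunay triangulations --- is that $\Xi(R)$ is a planar triangulation on $|R|$ vertices and hence has at most $3|R|$ edges, so for \emph{every} outcome of $R$ the number of Delaunay quadruples that appear in $\Xi(R)$ is at most $3|R|$. Taking expectations and using linearity gives $\sum_{Q}\rho^{4}(1-\rho)^{W_Q}\le 3\,\Ex[|R|]=3\rho n$, the sum running over realizable Delaunay quadruples of $P$. Discarding those with $W_Q>k$ and using $(1-\rho)^{W_Q}\ge(1-\rho)^{k}$ on the rest yields
\[
|\T_{\le k}|\ \le\ \frac{3n}{\rho^{3}(1-\rho)^{k}}\qquad\text{for every }\rho\in(0,1).
\]

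Now I optimize over $\rho$. The quantity $\rho^{3}(1-\rho)^{k}$ is maximized at $\rho=\tfrac{3}{k+3}$, which gives $|\T_{\le k}|\le \dfrac{n\,(k+3)^{k+3}}{9\,k^{k}}=\dfrac{n(k+3)^{3}}{9}\bigl(1+\tfrac{3}{k}\bigr)^{k}$. Since $(1+3/k)^{k+3}\to e^{3}$ as $k\to\infty$, the right-hand side is $\bigl(\tfrac{e^{3}}{9}+o(1)\bigr)nk^{3}$, which is the asymptotic statement. For the explicit bound I would observe that $g(k):=(1+3/k)^{k+3}$ is monotonically decreasing in $k$ (its logarithm has derivative $\ln(1+3/k)-3/k<0$), so for $k\ge 13$ it is at most $g(13)=(16/13)^{16}\approx 27.7$; dividing by $9$ gives a constant below $3.1$, i.e.\ $|\T_{\le k}|\le 3.1\,nk^{3}$ for all $k\ge 13$. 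The threshold $13$ --- and hence the ``$\eps n<13$'' base case in Algorithm~\ref{alg:epsnets} --- is forced precisely because this arithmetic is tight: the constant one actually gets at $k=13$ is already about $3.08$.

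The step I expect to cost the most work is not the sampling calculation but pinning the combinatorial count down to exactly $\le 3|R|$: one must argue that each quadruple counted in $\T_{\le k}$ is witnessed by a \emph{distinct} internal Delaunay edge of the sample. This needs the observation that a Delaunay quadruple $(\{p,q\},\{r,s\})$ --- whose triangles are interior-disjoint by definition --- fails to be realizable as a Delaunay edge only when each of $r,s$ lies inside the circumdisk of the triangle on the opposite side, together with an argument that such non-realizable quadruples can be charged to genuine Delaunay edges of no larger weight (or otherwise shown not to inflate the count). The remainder --- choosing $\rho$, expanding $(1+3/k)^{k+3}$, and checking the inequality at $k=13$ --- is routine once the weight convention (enclosed points only, so the exponent in $\rho^{4}(1-\rho)^{w}$ is $w$) is fixed.
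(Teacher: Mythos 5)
Your proof is correct and follows essentially the same route as the paper's: a Clarkson--Shor double count of the Delaunay edges of a $\rho$-sample against the planarity bound $3|R|$, optimized at $\rho=3/(k+3)$, with the asymptotic constant $e^3/9$ and the explicit constant for $k\geq 13$ obtained exactly as you do, from the monotone decrease of $(1+3/k)^{k+3}$ to $e^3$ and the evaluation $(16/13)^{16}/9\approx 3.08<3.1$. The realizability caveat you raise is a genuine subtlety, but the paper's own proof simply glosses over it (it implicitly treats every quadruple's ``all four sampled, union otherwise empty'' event as witnessing the Delaunay edge $pq$), so your extra care there goes beyond, rather than against, the paper's argument.
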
%
\hide{
\begin{proof}
The proof is an application of the Clarkson-Shor technique~\cite{M02}.
Pick each point in $P$ independently with probability $p_{cs}$ to
get a random sample $R_{cs}$. Count the expected number of   edges
in the Delaunay triangulation of $R_{cs}$ in two ways. On one hand, it is simply
less than $3E[|R_{cs}|] = 3np_{cs}$. 
On the other hand, it is: 
\begin{eqnarray*}
3np_{cs} &\geq& \Ex [\text{Number of Delaunay edges in } R_{cs}] =  \sum_{p,q \in P} \Pr[ \{p,q\} \text{ is a Delaunay edge of } R_{cs} ] \\
&\geq& \sum_{p,q \in P} \sum_{r,s \in P} \Pr[ (D_{pqr} \cup D_{pqs}) \cap R_{cs} = \emptyset ] \qquad \text{ (disjoint events) } \\
&\geq& \sum_{ (\{p,q\},\{r,s\}) \in \T_{\leq k} } \Pr[ (D_{pqr} \cup D_{pqs}) \cap R_{cs} = \emptyset ]\\
&\geq & \sum_{ (\{p,q\},\{r,s\}) \in \T_{\leq k} } p_{cs}^4 \cdot (1-p_{cs})^k
= |\T_{\leq k}| \cdot p_{cs}^4 \cdot (1-p_{cs})^k \\
\end{eqnarray*}
Therefore $| \T_{\leq k} | \leq 3np_{cs}/( p_{cs}^4(1-p_{cs})^k )$ and a simple calculation gives that setting $p_{cs}=\frac{3}{k+3}$ minimizes the right hand side. Then $| \T_{\leq k} | \leq 3n\frac{3}{k+3}/( (\frac{3}{k+3})^4(1-\frac{3}{k+3})^k )=nk^3\frac{1}{9}(1+\frac{3}{k})^{k+3}$, and the claim follows. 
\end{proof}
}

\begin{claim}
\label{claim:cs2}
$|\S_{\leq k}| \leq (e^2/4) nk^2$ asymptotically \text{ and } $|\S_{\leq k}| \leq \css nk^2$ for $k\geq 13$.
\end{claim}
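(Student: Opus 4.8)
The plan is to mirror the Clarkson--Shor argument of Claim~\ref{claim:cs}, but with \emph{convex-hull edges} of a random sample playing the role that Delaunay edges played there. First I would pick each point of $P$ independently with probability $p_{cs}$ to obtain $R_{cs}\subseteq P$, and estimate the expected number of convex-hull edges of $R_{cs}$ in two ways. The upper estimate is immediate: the convex hull of a finite planar point set is a convex polygon, hence has at most as many edges as vertices, so $\Ex[\,\#\{\text{hull edges of }R_{cs}\}\,]\le\Ex[|R_{cs}|]=n\,p_{cs}$.

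For the matching lower estimate I would use the correspondence between hull edges of $R_{cs}$ and Delaunay triples (treating disks and halfplanes as open): in general position, a pair $\{p,q\}\subseteq R_{cs}$ is an edge of the convex hull of $R_{cs}$ precisely when there is a unique $r\in R_{cs}$ such that the region $D_{pqr}\cup D_{pq\overline{r}}$ --- the Delaunay disk of the one triangle adjacent to $pq$, together with the open halfplane bounded by line $pq$ on the opposite side from $r$, which is exactly the ``second triangle'' of the hull edge $pq$ in the convention of Section~\ref{sec:epsnets} --- contains no point of $R_{cs}$. Hence, summing over unordered pairs $\{p,q\}$ and over $r\in P$ (the inner events being pairwise disjoint, since a hull edge has a single adjacent Delaunay triangle),
\begin{align*}
n\,p_{cs}\;\ge\;\Ex[\,\#\{\text{hull edges of }R_{cs}\}\,]
&=\sum_{\{p,q\}}\ \sum_{r}\Pr\big[\,p,q,r\in R_{cs}\ \text{and}\ (D_{pqr}\cup D_{pq\overline{r}})\cap R_{cs}=\emptyset\,\big]\\
&\ge\sum_{(\{p,q\},\{r\})\in\S_{\le k}}p_{cs}^{3}\,(1-p_{cs})^{W_{(\{p,q\},\{r\})}}\\
&\ge\;|\S_{\le k}|\cdot p_{cs}^{3}\,(1-p_{cs})^{k},
\end{align*}
using $W_{(\{p,q\},\{r\})}\le k$ on $\S_{\le k}$. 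Rearranging, $|\S_{\le k}|\le\frac{n}{p_{cs}^{2}(1-p_{cs})^{k}}$.

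It remains to optimize $p_{cs}$: maximizing $2\ln p_{cs}+k\ln(1-p_{cs})$ yields $p_{cs}=\frac{2}{k+2}$, giving $|\S_{\le k}|\le\frac{n}{4}\cdot\frac{(k+2)^{k+2}}{k^{k}}=\frac{nk^{2}}{4}\big(1+\tfrac{2}{k}\big)^{k+2}$. As $k\to\infty$ this decreases to $\frac{e^{2}}{4}nk^{2}$, which is the asymptotic bound; and since $\big(1+\tfrac{2}{k}\big)^{k+2}$ is decreasing in $k$, for every $k\ge13$ we get $|\S_{\le k}|\le\frac{nk^{2}}{4}\big(1+\tfrac{2}{13}\big)^{15}=\tfrac{1}{4}(15/13)^{15}\,nk^{2}<\css\,nk^{2}$, as claimed.

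I do not expect a real obstacle: the only points needing care are (i) stating the hull-edge / Delaunay-triple correspondence so that the region forced to be empty is \emph{exactly} $D_{pqr}\cup D_{pq\overline{r}}$, matching the definition of $W_{(\{p,q\},\{r\})}$ and the convention that a hull edge's second incident ``triangle'' is the outer halfplane; (ii) the elementary monotonicity fact $\frac{d}{dk}\big[(k+2)\ln(1+\tfrac{2}{k})\big]=\ln(1+\tfrac{2}{k})-\tfrac{2}{k}<0$, used to pass from the limiting constant to a bound valid for all $k\ge13$; and (iii) removing general-position assumptions by the usual perturbation argument.
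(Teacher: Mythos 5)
Your proposal is correct and follows essentially the same route as the paper: a Clarkson--Shor count of convex-hull (boundary Delaunay) edges of the sample, bounded above by $\Ex[|R_{cs}|]=np_{cs}$ and below by $|\S_{\le k}|\,p_{cs}^{3}(1-p_{cs})^{k}$, with the choice $p_{cs}=\tfrac{2}{k+2}$. Your added verification of the constant for $k\ge 13$ via the monotonicity of $(1+\tfrac{2}{k})^{k+2}$ is exactly the ``simple calculation'' the paper leaves implicit.
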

\begin{proof}
Pick each point in $P$ independently with probability $p_{cs}$ to get a random
sample $R_{cs}$. Count the expected number of edges
in the Delaunay triangulation of $R_{cs}$ that lie on the boundary of the
Delaunay triangulation, i.e.,   adjacent to exactly one triangle, in two ways.
On one hand, it is exactly the number of edges in the convex-hull
of $R_{cs}$, therefore  at most $E[|R_{cs}|] = np_{cs}$. Counted another way, it is:
\begin{eqnarray*} 
np_{cs} &\geq& \Ex [\text{Number of boundary Delaunay edges in } R_{cs}] =  \sum_{p,q \in P} \Pr[ \{p,q\} \text{ is a boundary Delaunay edge of } R_{cs} ] \\
&\geq& \sum_{p,q \in P} \sum_{r \in P} \Pr[ (D_{pqr} \cup D_{pq\overline{r}}) \cap R_{cs} = \emptyset ] \qquad \text{ (disjoint events) } \\
&\geq& \sum_{ (\{p,q\},\{r\}) \in \S_{\leq k} } \Pr[ (D_{pqr} \cup D_{pq\overline{r}}) \cap R_{cs} = \emptyset ]\\
&\geq & \sum_{ (\{p,q\},\{r\}) \in \S_{\leq k} } p_{cs}^3 \cdot (1-p_{cs})^k
= |\S_{\leq k}| \cdot p_{cs}^3 \cdot (1-p_{cs})^k \\
\end{eqnarray*}
Setting $p_{cs} = \frac{2}{k+2}$ gives the required result.
\end{proof}

\begin{claim}
$$ \Ex \Big[ | \{ e \in \Xi \ \ | \ \  k_1 \eps n \leq |P_e| \leq k_2 \eps n \} | \Big] \leq  \frac{\cs c_1^3}{\eps e^{k_1c_1}} (k_1^3c_1+3.7 k_2^2) \ \text{\  if \ } \eps n \geq 13.$$
\end{claim}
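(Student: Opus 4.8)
The plan is a Chazelle--Friedman / Clarkson--Shor exponential-decay estimate. Write $p_s=\frac{c_1}{\eps n}$ for the probability with which $R$ is sampled, and put $K_1=k_1\eps n$, $K_2=k_2\eps n$; for the ranges to which the claim is applied $k_1\ge 1$, so every relevant cardinality lies in $[K_1,K_2]$ with $K_1\ge\eps n\ge 13$. First I would set up a linearity-of-expectation count. Every edge $e\in\Xi$ is either interior, in which case $e$ together with its two adjacent Delaunay triangles $\Delta^1_e=\Delta(pqr)$, $\Delta^2_e=\Delta(pqs)$ is a Delaunay quadruple $(\{p,q\},\{r,s\})$ with $|P_e|=W_{(\{p,q\},\{r,s\})}$, or a convex-hull edge, in which case it yields a Delaunay triple $(\{p,q\},\{r\})$ with $|P_e|=W_{(\{p,q\},\{r\})}$. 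A quadruple of weight $w$ is realized in $\Xi$ precisely when its four defining points land in $R$ and none of the $w$ points of $P$ lying strictly inside $D_{pqr}\cup D_{pqs}$ do — an event of probability at most $p_s^4(1-p_s)^w$ — and a triple of weight $w$ is realized with probability at most $p_s^3(1-p_s)^w$ (here $w=|P_e|$ once boundary points are not counted, consistent with the definition of $P_\Delta$; the conditioning ``$|R|>c_1/2\eps$'' only perturbs these probabilities by a $1+o(1)$ factor, which is absorbed). Hence, by linearity of expectation,
\[
\Ex\big[\#\{e\in\Xi:K_1\le|P_e|\le K_2\}\big]\ \le\ \sum_{w=K_1}^{K_2}|\T_w|\,p_s^4(1-p_s)^w\ +\ \sum_{w=K_1}^{K_2}|\S_w|\,p_s^3(1-p_s)^w,
\]
where $\T_w$ (resp. $\S_w$) is the set of Delaunay quadruples (resp. triples) of weight exactly $w$.

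The next step is Abel summation, which replaces the uncontrolled $|\T_w|$, $|\S_w|$ by the partial sums $|\T_{\le w}|$, $|\S_{\le w}|$ that Claims~\ref{claim:cs} and \ref{claim:cs2} bound. With $g(w)=p_s^4(1-p_s)^w$ one has $g(w)-g(w+1)=p_s\,g(w)$, so summation by parts gives $\sum_{w=K_1}^{K_2}|\T_w|\,g(w)\le |\T_{\le K_2}|\,g(K_2)+p_s\sum_{w=K_1}^{K_2-1}|\T_{\le w}|\,g(w)$, and similarly for the triples. Since every $w$ in range is at least $13$, I would then substitute $|\T_{\le w}|\le\cs\,nw^3$ and $|\S_{\le w}|\le\css\,nw^2$, leaving sums of the shape $\sum_{w\ge K_1}w^j(1-p_s)^w$ for $j\in\{2,3\}$ together with boundary terms $K_2^j(1-p_s)^{K_2}$.

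To finish I would evaluate these. Because $p_sK_1=c_1k_1$ comfortably exceeds $j$, the point $K_1$ lies beyond the maximum of $w\mapsto w^j(1-p_s)^w$ (which is located near $j/p_s$), so the summand decreases on $[K_1,\infty)$ and the tail is dominated, up to a factor $1+O\!\big(\tfrac{1}{c_1k_1}\big)$, by its first term: writing $w=K_1+i$ and expanding $(1+i/K_1)^j$ against the elementary series $\sum_i i^\ell(1-p_s)^i$ yields $p_s\sum_{w\ge K_1}w^j(1-p_s)^w\le K_1^j(1-p_s)^{K_1}\big(1+O(\tfrac{1}{c_1k_1})\big)$. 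Using $(1-p_s)^{K_1}=\big((1-p_s)^{1/p_s}\big)^{c_1k_1}\le e^{-c_1k_1}$ (and $(1-p_s)^{K_2}\le(1-p_s)^{K_1}\le e^{-c_1k_1}$ for the boundary terms), and then plugging in $p_s=\frac{c_1}{\eps n}$, $K_1=k_1\eps n$, $K_2=k_2\eps n$ so that all powers of $n$ cancel, the quadruple tail contributes $\frac{\cs c_1^4k_1^3}{\eps}e^{-c_1k_1}=\frac{\cs c_1^3}{\eps e^{k_1c_1}}\cdot c_1k_1^3$, the triple boundary term contributes $\frac{\css c_1^3k_2^2}{\eps}e^{-c_1k_1}$, and the three leftover pieces — the $O(\tfrac{1}{c_1k_1})$ correction on the quadruple tail, the entire triple tail, and the quadruple boundary term — are each $O\!\big(\frac{c_1^3k_1^2}{\eps}e^{-c_1k_1}\big)\le O\!\big(\frac{c_1^3k_2^2}{\eps}e^{-c_1k_1}\big)$.

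I expect the genuine work to be not in the probabilistic argument but in the final bookkeeping: one has to verify that the triple boundary term together with all the leftover pieces really fit beneath $3.7\cdot\frac{\cs c_1^3}{\eps}k_2^2e^{-c_1k_1}$. This is where the concrete values of $\cs,\css$ (in particular $\css<\cs$), the exact size of the geometric-tail correction factor, the inequality $k_1\le k_2$, and the fixed value of $c_1$ all have to be combined carefully, and where a slightly sharper split than ``boundary term plus tail'' may be needed to keep the constant down to $3.7$; everything preceding that is routine Clarkson--Shor.
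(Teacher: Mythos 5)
Your probabilistic skeleton coincides with the paper's: every edge of $\Xi$ comes from a Delaunay quadruple or, on the hull, a Delaunay triple; such a configuration of weight $w$ survives with probability at most $p_s^4(1-p_s)^w$ resp.\ $p_s^3(1-p_s)^w$; and your summation by parts is exactly the rearrangement the paper uses to pass from $|\T_w|,|\S_w|$ to the cumulative bounds of Claims~\ref{claim:cs} and~\ref{claim:cs2} (the aside about conditioning on $|R|>c_1/2\eps$ is not needed -- the claim concerns the unconditioned sample). The gap is in the only non-routine part of the statement: producing the explicit coefficients $1\cdot c_1k_1^3$ and $3.7\,k_2^2$. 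You defer this (``the genuine work \dots final bookkeeping''), and the particular split you propose -- an infinite tail expanded around $K_1$ plus boundary terms $|\T_{\le K_2}|g(K_2)$ and $|\S_{\le K_2}|g(K_2)$ -- does not in general fit under the stated bound. In units of $c_1^3e^{-c_1k_1}/\eps$, your quadruple tail gives $3.1\,(c_1k_1^3+3k_1^2+6k_1/c_1+6/c_1^2)$ and your full triple tail adds $2.14\,(k_1^2+2k_1/c_1+\dots)$; already $3\cdot 3.1+2.14=11.44$ against the total budget $3.7\cdot 3.1=11.47$ per $k_2^2$ when $k_1$ is close to $k_2$, so the lower-order terms ($\approx 1.9k_1$) do not fit, and the quadruple boundary term is worse: for $K_2=K_1+1$ it equals $3.1\,nK_2^3p_s^4(1-p_s)^{K_1+1}$, i.e.\ essentially a second copy of the main term $c_1k_1^3$, against a budget of $3.7k_2^2$. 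So your estimates close only when $k_2$ is well separated from $k_1$ (as in the uses inside Lemma~\ref{lemma:epsnetproof}), whereas the claim is a clean inequality for all $k_1\le k_2$, and even in the separated regime you have not carried out the arithmetic that constitutes the claim.

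The paper's bookkeeping avoids all of these leftovers. For hull edges it does no tail analysis at all: it bounds the whole contribution by $|\S_{\le k_2\eps n}|\,p_s^3(1-p_s)^{k_1\eps n}\le \css\,k_2^2c_1^3/(\eps e^{c_1k_1})$. For quadruples, after the same rearrangement you perform, it bounds each increment of the majorant $\cs\,ni^3$ by $3\,\cs\,ni^2\le 3\,\cs\,n(k_2\eps n)^2$ and sums the geometric series $\sum_{i>k_1\eps n}(1-p_s)^i\le (1-p_s)^{k_1\eps n}/p_s$, which yields exactly $3\,\cs\,k_2^2c_1^3/(\eps e^{c_1k_1})$ with no boundary term and no lower-order corrections; the main term $|\T_{\le k_1\eps n}|\,p_s^4(1-p_s)^{k_1\eps n}\le \cs\,c_1^4k_1^3e^{-c_1k_1}/\eps$ is the same as yours. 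The constant then comes from $3+2.14/3.1\le 3.7$, i.e.\ $\css\le 0.7\,\cs$, valid for every $k_1\le k_2$. To repair your write-up, either adopt this accounting (cap $i^2$ by $(k_2\eps n)^2$ and use a pure geometric series instead of expanding around $K_1$ and keeping $K_2$-boundary terms), or restrict the claim to the separated pairs $(k_1,k_2)$ actually used and do the explicit arithmetic.
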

\hide{
\begin{proof}
The crucial observation is that two points $\{p, q\}$ form an edge in $\Xi$ with two adjacent
triangles $\Delta(pqr), \Delta(pqs) \in \Xi$ \textbf{iff} $\{p,q,r,s\} \subseteq R$  and
none of the points of $P$ in $D_{pqr} \cup D_{pqs}$ are picked in $R$ (i.e, the points $p,q,r,s$ form the Delaunay tuple  $( \{p,q\}, \{r,s\})$). 
Or $\{p,q\}$ form an edge on the convex-hull of $\Xi$ with one adjacent
triangle $\Delta(pqr)$ \textbf{iff} $\{p,q,r\} \subseteq R$ and none of
the points of $P$ in $D_{pqr} \cup D_{pq\overline{r}}$ are picked in $R$.

Let $\chi_{(\{p,q\}, \{r,s\})}$ be the random variable that is $1$ iff
$\{p,q\}$ form an edge in $\Xi$ and their two adjacent triangles
are $\Delta(pqr)$ and $\Delta(pqs)$. Let $\chi_{(\{p,q\}, \{r\})}$ be the random variable that is $1$ iff
$\{p,q\}$ form an edge in $\Xi$ with exactly one adjacent
triangle $\Delta(pqr)$.
Noting that every edge in $\Xi$ must come from either a Delaunay quadruple or a Delaunay
triple, 
\begin{eqnarray*}
 \Ex[ |\{ e \ \ | \ \ k_1 \eps n \leq |P_e| \leq k_2 \eps n \} |]
 &=& \sum_{\substack{p,q,r,s \in P \\ k_1 \eps n \leq W_{(\{p,q\},\{r, s\})} \leq  k_2 \eps n }} \Pr[ \chi_{(\{p,q\}, \{r,s\})} = 1] \ \  + \\ 
 &&\sum_{\substack{p,q,r \in P \\ k_1 \eps n \leq W_{(\{p,q\},\{r\})} \leq k_2 \eps n }} \Pr[ \chi_{(\{p,q\}, \{r\})} = 1]
\end{eqnarray*} 
The second term is asymptotically smaller, so we   bound it somewhat loosely:
\begin{eqnarray*}
\sum_{\substack{p,q,r \in P \\ k_1 \eps n \leq W_{(\{p,q\},\{r\})} \leq k_2 \eps n }} \Pr[ \chi_{(\{p,q\}, \{r\})} = 1] 
&\leq& \sum_{\substack{p,q,r \\ k_1 \eps n \leq W_{(\{p,q\},\{r\})} \leq k_2 \eps n}} (c_1/\eps n)^3 (1-c_1/\eps n)^{W_{(\{p,q\},\{r\})}} \\
&\leq& |\S_{\leq k_2 \eps n}| \cdot (c_1/\eps n)^3 (1-c_1/\eps n)^{k_1 \eps n} \\
&\leq& \css n (k_2 \eps n)^2 \cdot (c_1 /\eps n)^3 \cdot e^{-c_1k_1} = \frac{\css k_2^2 c^3_1}{\eps e^{c_1 k_1}}.
\end{eqnarray*}
Now we carefully bound the first term:
\begin{eqnarray*}
\sum_{\substack{p,q,r,s \in P \\ k_1 \eps n \leq W_{(\{p,q\},\{r, s\})} \leq  k_2 \eps n }} \Pr[ \chi_{(\{p,q\}, \{r,s\})} = 1]
&\leq& \sum_{i = k_1 \eps n }^{ k_2 \eps n} \sum_{\substack{p,q,r,s \\ W_{(\{p,q\},\{r, s\})} = i}} \Pr[ \chi_{(\{p,q\}, \{r,s\})} = 1] \\
&\leq& \sum_{i = k_1 \eps n }^{ k_2 \eps n} \sum_{\substack{p,q,r,s\\ W_{(\{p,q\},\{r, s\})} = i}} (c_1/\eps n)^4 (1-c_1/\eps n)^{i} \\
 & \leq &  \sum_{i = k_1 \eps n }^{ k_2 \eps n}  |\T_{i}|(c_1/\eps n)^4 (1-c_1/\eps n)^{i}  
\end{eqnarray*}
As the above summation is exponentially decreasing as a function of $i$, it is
maximized when $|\T_{i_0}|= \max |\T_{\leq i_0}|$ where $i_0 = k_1\eps n$, and $|\T_i|=\max|\T_{\leq i}|-\max|\T_{\leq i-1}|$ and 
so on. Using Claim~\ref{claim:cs} we obtain:
\begin{eqnarray*}
 & \leq & |\T_{\leq k_1\eps n}|\cdot (c_1/\eps n)^4 (1-c_1/\eps n)^{k_1\eps n} + \sum_{i = k_1 \eps n + 1}^{ k_2 \eps n} (|\T_{\leq i}|-|\T_{\leq i-1}|)\cdot (c_1/\eps n)^4 (1-c_1/\eps n)^{i} \\
 & \leq & \cs n (k_1\eps n)^3\cdot (c_1/\eps n)^4 (1-c_1/\eps n)^{k_1\eps n} + \sum_{i = k_1 \eps n + 1}^{ k_2 \eps n} \cs n\cdot 3 i^2\cdot (c_1/\eps n)^4 (1-c_1/\eps n)^{i} \\
 & \leq & \cs \frac{k_1^3c_1^4e^{-k_1c_1}}{\eps}  + \cs \frac{3k_2^2 c_1^4}{\eps^2 n} \sum_{i = k_1 \eps n + 1}^{ k_2 \eps n}  (1-c_1/\eps n)^{i} \\
 & \leq & \cs \frac{k_1^3c_1^4e^{-k_1c_1}}{\eps}  + \cs \frac{3k_2^2 c_1^4}{\eps^2 n} \frac{(1-c_1/\eps n)^{k_1\eps n}}{c_1/\eps n} \leq  \frac{\cs c_1^3}{\eps e^{k_1c_1}} (k_1^3c_1+3k_2^2). \\
\end{eqnarray*}
The proof follows by summing up the two terms.
\end{proof}
}

Using the above facts we can prove the main result.

\begin{lemma}
Algorithm~\textsc{Compute $\eps$-net} computes an $\eps$-net of
expected size $13.4/\epsilon$.
\label{lemma:epsnetproof}
\end{lemma}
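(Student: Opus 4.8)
The plan is to prove, by induction on $n$, two statements about the output $S=\big(\bigcup_e R_e\big)\cup R$ of Algorithm~\ref{alg:epsnets}: that $S$ is an $\eps$-net of $P$ under disk ranges, and that $\Ex[|S|]\le 13.4/\eps$. The base case $\eps n<13$ is immediate, since then $S=P$ and $|P|=n\le 13/\eps<13.4/\eps$. Throughout I will use the fact, which the restart step guarantees, that $|R|\ge c_1/(2\eps)>5$; since any Delaunay circumdisk of $R$ contains no point of $R$ other than its three defining vertices, at most $4$ points of $R$ lie in $P_e$ for any edge $e$, so $|P_e|\le n-|R|+4<n$ and every recursive call is on a strictly smaller point set (so the induction hypothesis and termination are both in force).

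For correctness, fix a disk $D$ with $|D\cap P|\ge\eps n$. If $D\cap R\neq\emptyset$ we are done, so assume $D$ is empty of $R$. The key geometric step is to show that then $D\cap P\subseteq P_e$ for some edge $e\in\Xi$. I would prove this by pushing $D$ to an extreme disk $D'$ within the family of disks that contain $D\cap P$ and avoid $R$: the boundary of $D'$ passes either through three points of $R$ — which then form a Delaunay triangle $\Delta$ with $D'=D_\Delta$, so $D\cap P\subseteq P_\Delta\subseteq P_e$ for any edge $e$ of $\Delta$ — or through exactly two points $p,q$ of $R$, in which case $\{p,q\}=e$ is a Delaunay edge and $D'$ lies in the pencil of circles through $p,q$ that are empty of $R$, whose extreme members are $D_{\Delta^1_e}$ and $D_{\Delta^2_e}$ (one of which degenerates to the halfspace $D_{pq\overline r}$ when $e$ is a convex-hull edge); since every disk of the pencil lying between the two extremes is contained in their union, $D\cap P\subseteq D'\subseteq D_{\Delta^1_e}\cup D_{\Delta^2_e}$, i.e. $D\cap P\subseteq P_e$. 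Now $|P_e|\ge|D\cap P|\ge\eps n$, hence $\eps'=\eps n/|P_e|\le 1$ and $|D\cap P_e|\ge|D\cap P|\ge\eps n=\eps'|P_e|$, so $D$ is a heavy range for $(P_e,\eps')$ and is therefore hit by the $\eps'$-net $R_e$ — supplied by Lemma~\ref{lemma:2}, Lemma~\ref{lemma:10}, or the recursive call (valid by induction). Thus $S$ hits $D$.

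For the size bound, write $\Ex[|S|]=\Ex[|R|]+\sum_{e\in\Xi}\Ex[|R_e|]=\tfrac{c_1}{\eps}+\sum_{e}\Ex[|R_e|]$ and split the edges by the value of $|P_e|$. Edges with $|P_e|<\eps n$ contribute $0$ (no branch of the algorithm fires, and no heavy disk can be caught there). Edges with $\eps n<|P_e|<\tfrac32\eps n$ contribute at most $2$ each, edges with $\tfrac32\eps n\le|P_e|<2\eps n$ at most $10$ each, and for each integer $j\ge 2$ an edge with $j\eps n\le|P_e|<(j+1)\eps n$ contributes, by the induction hypothesis applied to the recursive call on $(P_e,\eps')$, at most $13.4/\eps'=13.4\,|P_e|/(\eps n)\le 13.4(j+1)$. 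Applying the preceding claim on $\Ex\big[\,|\{e\in\Xi: k_1\eps n\le|P_e|\le k_2\eps n\}|\,\big]$ to each band and summing gives
\[
\Ex[|S|]\ \le\ \frac1\eps\Big(c_1+2\,g(1,\tfrac32)+10\,g(\tfrac32,2)+\sum_{j\ge2}13.4(j+1)\,g(j,j+1)\Big),\qquad g(k_1,k_2):=\cs c_1^3 e^{-k_1c_1}\big(k_1^3c_1+3.7\,k_2^2\big),
\]
where the sum over $j$ converges geometrically and is dominated by its $j=2$ term. It then remains to exhibit a value of $c_1$ (one near $12$) for which the bracketed quantity is at most $13.4$; this is a short numerical verification, and one also checks that the probability of a restart is $e^{-\Omega(1/\eps)}$ by a Chernoff bound on $|R|$, so conditioning on no restart perturbs the expectation negligibly.

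I expect the principal obstacle to be making the recursion airtight: pinning down well-foundedness (the role of the restart step in forcing $|P_e|<n$), phrasing the induction so the $13.4/\eps'$ guarantee is available for every subproblem, and then confirming that the geometric-series contribution of the deep ($\eps'\le\tfrac12$) levels, added to the $c_1/\eps$ cost of the top-level sample and the contributions controlled by Lemmas~\ref{lemma:2} and~\ref{lemma:10}, still closes at exactly $13.4/\eps$ for a feasible $c_1$ — the numbers are tight and leave little slack. A secondary but genuine nuisance is the handling of the geometric claim at the boundary: convex-hull edges, where one adjacent "triangle" is a halfspace and the Delaunay-triple / $\S$ bookkeeping enters, and the degenerate case $\eps'=1$, where a single point of $P_e$ already suffices.
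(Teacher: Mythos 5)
Your proposal is correct and follows essentially the same route as the paper: the same pencil-of-empty-disks argument showing an unhit heavy disk is covered by $D_{\Delta^1_e}\cup D_{\Delta^2_e}$ for some Delaunay edge $e$, and the same expected-size recursion that charges $2$, $10$, or $13.4/\eps'$ per edge by bands of $|P_e|$, bounds band counts via the expectation claim, and closes the induction numerically at $c_1\approx 12$. The only differences are cosmetic (arithmetic bands of width $\eps n$ instead of the paper's dyadic bands $E_i$, plus your welcome but inessential remarks on termination, the restart conditioning, and the $\eps'=1$ boundary case).
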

\begin{wrapfigure}{L}{0.3\textwidth}
\vspace{-0.2in}
\includegraphics[width=5cm]{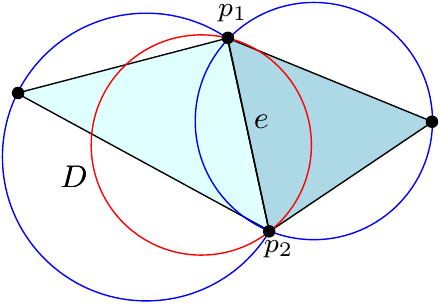}
\vspace{-0.2in}
\end{wrapfigure}
\emph{Proof.}
First we show that the algorithm computes an $\eps$-net. 
Take any disk
$D$ with center $c$ containing $\eps n$ points of $P$, and not hit by the initial random
sample $R$. Increase its radius while keeping its center $c$ fixed until
it passes through a point, say $p_1$ of $R$. Now further expand the disk
by moving $c$ in the direction $\vec{p_1c}$ until its boundary passes through a second
point $p_2$ of $R$. 
The edge $e$ defined by $p_1$ and $p_2$
belongs to $\Xi$, and the 
two extreme disks in the pencil of empty disks through $p_1$ and $p_2$
are the disks $D_{\Delta^1_e}$ and $D_{\Delta^2_e}$. Their union covers $D$,
and so $D$ contains $\eps n$ points out of the set $P_e$. 
Then the net $R_e$ computed for $P_e$ must hit $D$,
as $\eps n = (\eps n/|P_e|) \cdot |P_e|$.

For the expected size, clearly, if $\eps n < 13$ then the returned set is an $\eps$-net of size $\frac{13}{\eps}$.
Otherwise we can calculate the expected number of points added to the $\eps$-net during solving the sub-problems. We simply group them by the number of points in them. 
Set $E_i = \{e \ | \ 2^i \eps n \leq |P_e| < 2^{i+1} \eps n\}$, and let us denote the size of the $\eps$-net returned by our algorithm with $f'(\eps)$. Then
\begin{eqnarray*}
\Ex\left[f'(\eps)\right] = \Ex[|R|] + \Ex \Big[|\bigcup_{e \in \Xi} R_e| \Big] &=& \frac{c_1}{\eps} + \Ex [|\{e \ | \ \eps n \leq |P_e| < 3 \eps n/2\}|] \cdot f(2/3) \\
&& + \Ex [|\{e \ | \ 3\eps n/2 \leq |P_e| < 2 \eps n\}|] \cdot f(1/2) \\
&& + 
\sum_{i=1} \Ex \left[ \sum_{e \in E_i} f'\left(\frac{\eps n}{|P_e|}\right) \right] \\
\end{eqnarray*}
Noting that $\Ex [ \sum_{e \in E_i} f'(\frac{\eps n}{|P_e|}) \ | \ |E_i| = t ] \leq t \Ex [f'(1/2^{i+1})]$, we get 
$$ \Ex \left[ \sum_{e \in E_i} f'\left(\frac{\eps n}{|P_e|}\right) \right] 
= \Ex  \left[ \Ex [  \sum_{e \in E_i} f'\left(\frac{\eps n}{|P_e|}\right) | E_i ]  \right]
\leq \Ex  \left[ |E_i| \cdot \Ex [ f'(1/2^{i+1})] \right] 
= \Ex  [|E_i|] \cdot \Ex [f'(1/2^{i+1})] $$
as $|E_i|$ and $f'(\cdot)$ are independent. As $\eps' = \frac{\eps n }{|P_e|} > \eps$, by induction, assume $\Ex [f'\left(\eps'\right)] \leq \frac{13.4}{\eps'}$. Then
\begin{eqnarray*}
\Ex\left[f'(\eps)\right] &\leq& \frac{c_1}{\eps} + \frac{\cs \cdot c_1^3(c_1+8.34)}{\eps e^{c_1}} \cdot 2  +  \frac{\cs \cdot c_1^3((3/2)^3c_1+14.8)}{\eps e^{3c_1/2}} \cdot 10  \\
&& + \sum_{i} \frac{\cs \cdot c_1^3(2^{3i}c_1+3.7\cdot 2^{2i+2})}{\eps e^{c_1 2^i}} \cdot 13.4 \cdot 2^{i+1}    \leq \frac{13.4}{\eps}\\
\end{eqnarray*}
 by setting $c_1 = 12$.  
\qed

Finally, we bound the expected running time of the algorithm.

\begin{lemma}
Algorithm~\textsc{Compute $\eps$-net} runs in expected time $O(n \log n)$.
\end{lemma}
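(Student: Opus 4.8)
The plan is to account for the cost of every step of Algorithm~\ref{alg:epsnets} and show that the total is dominated by the Delaunay triangulation computation and the point-location work, both of which are $O(n\log n)$, plus a geometrically decreasing sum over the recursion. First I would handle the base case and the sampling: the test $\eps n < 13$ and picking $R$ by independent coin flips both take $O(n)$ time, and the restart condition $|R|\le c_1/2\eps$ fails only with constant probability (a Chernoff bound, since $\Ex[|R|]=c_1/\eps$ and $c_1=12$), so the expected number of restarts is $O(1)$ and does not affect the asymptotics. Computing the Delaunay triangulation $\Xi$ of $R$ takes $O(|R|\log|R|)=O(n\log n)$ time.

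Next I would bound the cost of distributing the $n$ points of $P$ among the Delaunay disks and of computing the sets $P_\Delta$ and $P_e$. The key fact is that each point $p\in P$ lies in $\Ex[O(1)]$ Delaunay disks $D_\Delta$: this is exactly the Clarkson--Shor bound, since the number of $(\Delta,p)$ incidences with $p\in D_\Delta$ equals $\sum_k |\T_k^{(1)}|\cdot(\text{prob.\ it survives})$-type sums, giving $O(n)$ expected total incidences; in fact this is a standard consequence of the analysis already carried out in Claim~\ref{claim:cs} and the claim bounding $\Ex[|\{e: k_1\eps n\le |P_e|\le k_2\eps n\}|]$ summed over all dyadic ranges, which yields $\Ex[\sum_e |P_e|] = O(n)$. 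So the sets $P_e$ can be assembled in $O(n\log n)$ expected total time (the $\log$ coming from point location of each $p$ in $\Xi$, e.g.\ via the history DAG of a randomized incremental construction, or from a sweep).

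Then I would bound the recursive work. For each edge $e$, let $\eps'=\eps n/|P_e|$. If $\eps'>1/2$ the subproblem is solved by Lemma~\ref{lemma:2} or Lemma~\ref{lemma:10} in $O(|P_e|\log|P_e|)$ time; summing over all such $e$ costs $O\bigl((\sum_e|P_e|)\log n\bigr)=O(n\log n)$ in expectation. If $\eps'\le 1/2$ we recurse on $(P_e,\eps')$. Writing $T(m)$ for the expected running time on an input of size $m$ with the relevant $\eps$, the recursion has the form $T(n)\le O(n\log n) + \sum_{e} T(|P_e|)$ where the $|P_e|$ satisfy $\sum_e |P_e|=O(n)$ in expectation and, crucially, each $|P_e| \le n/2$ — wait, that is the point to be careful about, since $|P_e|$ could be close to $n$; but the recursive branch is entered only when $\eps'\le 1/2$, i.e.\ $|P_e|\ge 2\eps n$, which does not by itself shrink $|P_e|$ relative to $n$. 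The resolution, which I would make precise, is that the \emph{recursion is on $\eps'$, not on a shrinking point set}: the same Clarkson--Shor/dyadic bound shows $\Ex[\sum_{e\in E_i}|P_e|] \le \frac{O(1)\,c_1^3 2^{3i}}{e^{c_1 2^i}}\cdot 2^{i+1}\eps n \cdot \frac{1}{2^i\eps n}\cdot n$-style terms that decay super-geometrically in $i$, so $\Ex[\sum_e|P_e|\cdot\mathbf{1}[\eps'\le1/2]]\le \gamma n$ for a constant $\gamma<1$; unrolling the recursion gives a geometric series and total expected time $O(n\log n)$.

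\textbf{The main obstacle} I expect is exactly this last point: making rigorous that the recursive subproblems shrink \emph{in total size} by a constant factor in expectation, so that the depth-weighted sum of $O(m\log m)$ terms telescopes. This requires invoking the tail bound from the unnamed claim above (the one bounding $\Ex[|\{e : k_1\eps n\le |P_e|\le k_2\eps n\}|]$) with the dyadic choice $k_1=2^i$, $k_2=2^{i+1}$ and the value $c_1=12$, and checking that $\sum_{i\ge1} \frac{c_1^3(2^{3i}c_1+3.7\cdot2^{2i+2})}{e^{c_1 2^i}}\cdot 2^{i+1} < 1$ — the very same computation that closed the size bound in Lemma~\ref{lemma:epsnetproof}, now reused to bound expected work rather than expected output size. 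Everything else (sampling, one Delaunay triangulation, point location, and the constant-size net computations of Lemmas~\ref{lemma:2} and~\ref{lemma:10}) is routine $O(n\log n)$.
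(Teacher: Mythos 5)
Your proposal is correct and takes essentially the same approach as the paper: constant expected number of restarts, one Delaunay triangulation of $R$, the Clarkson--Shor/tail-claim bound giving $\Ex[\sum_e |P_e|] = O(n)$, point location to distribute points, and the recursion closed by the same super-exponentially decaying dyadic sum at $c_1=12$ that was used in the size bound. The only differences are cosmetic: the paper resolves the recursion by substituting $\Ex[T(k)]=ck\log k$ and verifying the inequality (rather than your geometric-shrinkage unrolling, which rests on the identical numeric sum), and it spells out the one detail you gloss over --- enumerating, for each $p\in P$, all Delaunay disks containing it via point location followed by a breadth-first search in the dual graph, whose correctness uses the fact that these triangles form a connected subgraph --- which is routine once the $O(n)$ incidence bound is in hand.
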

\hide{
\begin{proof}
Note that $\Ex[|R|] = c_1/\eps$.
First we bound the expected total size of all the sets $P_e$:
\begin{eqnarray*}
\Ex \Big[ \sum_{e \in \Xi} |P_e| \Big] &\leq&  \Ex [ | \{e \ \ | \ \ 0 \leq |P_e| <  \eps n\}|] \cdot \eps n +  \sum_{i=0} \Ex [|\{e \ \ | \ \ 2^i \eps n \leq |P_e| < 2^{i+1} \eps n\}|] \cdot 2^{i+1}\eps n \\
&\leq& O(\frac{\eps n}{\eps}) +   \sum_{i=0} O\left( \frac{(2^{i})^3}{\eps e^{2^ic_1}} \right) \cdot 2^{i+1} \eps n = O(n),
\end{eqnarray*}
as the last summation is a geometric series. 
This implies that the expected total number of incidences between points in $P$,
and Delaunay disks in $\Xi$ is $O(n)$. The Delaunay triangulation of $R$ can be
computed in expected time $O(1/\eps \log 1/\eps)$. 
Steps 5-6 compute, for each
Delaunay disk $D \in \Xi$, the list of points contained in $D$. This can be computed in $O(n \log 1/\eps)$ time by instead finding, for each $p \in P$, the list of Delaunay disks in $\Xi$
containing $p$, as follows. First do point-location in $\Xi$ to locate the triangle $\Delta$
containing $p$, in expected time $O(\log 1/\eps)$. Clearly $D_{\Delta}$ contains $p$. Now starting from $\Delta$, do a breadth-first search
in the dual planar graph of the Delaunay triangulation to find the maximally connected
subset of triangles (vertices in the dual graph) whose Delaunay disks contain $p$. As each
vertex in the dual graph has degree at most $3$, this takes time proportional to the
discovered list of triangles, which as shown earlier is $O(n)$ over all $p \in P$. The correctness follows from the following:

\begin{fact}
Given a Delaunay triangulation $\Xi$ on $R$ and any point $p \in \Re^2$, the set of triangles in $\Xi$ whose Delaunay disks contain $p$ form a connected sub-graph in the dual graph to $\Xi$.
\end{fact}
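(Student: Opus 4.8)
\emph{Proof plan.} The plan is to give a direct, self-contained argument by exhibiting, for every cell $\Delta$ of $\Xi$ whose Delaunay disk contains $p$, an explicit path in the dual graph from $\Delta$ to the cell $\Delta_p$ of $\Xi$ that contains $p$, such that \emph{every} cell along the path also has $p$ in its Delaunay disk. (Conceptually this is the classical statement that the faces of a convex polytope visible from an exterior point form a connected region of its boundary, viewed through the lifting map $\ell(x,y)=(x,y,x^2+y^2)$, under which ``$p$ lies in the Delaunay disk of $\Delta$'' becomes ``$\ell(p)$ lies below the plane of the lifted facet''; but the walk argument below avoids that machinery, and I will take disks to be open, the closed case being analogous.)

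First I would isolate the one local fact that does the work. Let $uv$ be an edge of $\Xi$ with adjacent cells $\Delta(uvw)$ and $\Delta(uvw')$, and let $H_w,H_{w'}$ be the open half-planes bounded by $\mathrm{aff}(uv)$ containing $w$ and $w'$. Because $uv$ is locally Delaunay, $w'\notin\mathrm{int}(D_{uvw})$; the two circumcircles meet exactly in $u$ and $v$, so one of the two disks contains the other inside $H_w$ and the containment reverses inside $H_{w'}$, and since $w'\in\partial D_{uvw'}\setminus\mathrm{int}(D_{uvw})$ lies in $H_{w'}$, it must be that $D_{uvw}\cap H_{w'}\subseteq D_{uvw'}$ (and hence $D_{uvw'}\cap H_w\subseteq D_{uvw}$) -- this is just the standard property of locally Delaunay edges, that on crossing the edge the circumdisk can only grow on the far side. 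Convex-hull edges are handled by the usual convention that treats the outer half-plane as a degenerate Delaunay ``disk'', for which the same inclusion is trivial.

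Next I would build the path. Fix $\Delta$ with $p\in\mathrm{int}(D_\Delta)$; assuming $\Delta\neq\Delta_p$, pick a point $x$ in the interior of $\Delta$ in general position (so that $\overline{xp}$ avoids all vertices of $\Xi$) and let $\Delta=T_0,T_1,\dots,T_m=\Delta_p$ be the sequence of cells crossed by the straight walk along $\overline{xp}$ from $x$ to $p$; this is finite, and if $p\notin\mathrm{conv}(R)$ the walk terminates in the half-plane cell containing $p$. Consecutive cells $T_j,T_{j+1}$ share an edge $e_{j+1}$ crossed transversally by the segment, and since the segment continues from $e_{j+1}$ to $p$, the point $p$ lies in the open half-plane bounded by $\mathrm{aff}(e_{j+1})$ on the $T_{j+1}$-side. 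Applying the local fact (with $T_j$ as $\Delta(uvw)$ and $T_{j+1}$ as $\Delta(uvw')$), a neighbourhood of $p$ lies in $\mathrm{int}(D_{T_j})$ intersected with that half-plane, which is contained in $D_{T_{j+1}}$; hence $p\in\mathrm{int}(D_{T_j})\Rightarrow p\in\mathrm{int}(D_{T_{j+1}})$. Induction from $T_0=\Delta$ gives $p\in\mathrm{int}(D_{T_j})$ for all $j$, so $T_0,\dots,T_m$ is a path in the dual graph joining $\Delta$ to $\Delta_p$ inside the set of cells whose Delaunay disk contains $p$; since $\Delta_p$ is a fixed member of this set, it is connected.

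The step I expect to need the most care is the local inclusion fact -- in particular the bookkeeping of which of the two circular disks is larger on each side of $\mathrm{aff}(uv)$, and verifying that the open/interior version propagates -- together with making the straight walk rigorous in the presence of the infinite half-plane cells and dispatching degenerate placements of $p$ (e.g.\ $p\in R$, or $p$ cocircular with a Delaunay triangle) by an infinitesimal perturbation. The rest is routine.
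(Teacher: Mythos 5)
Your route is genuinely different from the paper's. The paper disposes of this Fact in one line: lift $R$ to the paraboloid in $\Re^3$, observe that ``$p$ lies in the circumdisk of $\Delta$'' means the corresponding lower facet is visible from the lifted point, and invoke the fact that the facets of a convex polytope visible from an exterior point form a connected patch. You instead give a self-contained planar argument: the local locally-Delaunay inclusion $D_{uvw}\cap H_{w'}\subseteq D_{uvw'}$ plus a straight walk from a point of $\Delta$ to $p$, propagating ``$p\in\mathrm{int}(D_{T_j})$'' across each crossed edge. That local lemma and the propagation step are correct (the two circumcircles through $u,v$ are nested within each open half-plane of $\mathrm{aff}(uv)$, and the position of $w'$ fixes which is larger on which side), and when $p\in\mathrm{conv}(R)$ the segment stays inside the hull by convexity, so every cell crossed is a triangle and every walk ends at the same triangle $\Delta_p$; this case is fully closed. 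Your argument is more elementary and effectively algorithmic (it is essentially the correctness proof of the straight-walk/Bowyer--Watson cavity), at the cost of length; the paper's lifting argument is shorter and generalizes verbatim to higher dimensions, but imports polytope-visibility machinery.

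The one genuine soft spot is the case $p\notin\mathrm{conv}(R)$. There ``the half-plane cell containing $p$'' is neither unique (the outer half-planes beyond hull edges overlap, and $p$ typically lies in several) nor a triangle of $\Xi$, so the closing sentence ``since $\Delta_p$ is a fixed member of this set, it is connected'' does not apply: walks started from different disk-containing triangles may terminate in different unbounded cells, and you still have to connect those unbounded cells to one another (e.g.\ via the standard infinite-face adjacency, noting that the hull edges whose outer half-planes contain $p$ are exactly those visible from $p$ and hence form a contiguous chain). This is not merely a presentational issue: restricted literally to triangles, the statement can fail for exterior $p$ --- take a long, nearly flat point set with one almost-degenerate Delaunay triangle at each end; both end triangles have enormous circumdisks reaching a distant exterior point $p$ below the set, while all intermediate triangles have shallow circumdisks, so the disk-containing triangles form two components joined only through unbounded cells. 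So the Fact should be read with the paper's own convention that hull edges carry a half-plane ``disk'' as an extra cell (which is how the algorithm uses it); with that convention your walk argument goes through once you add the one-line argument connecting the relevant unbounded cells. The paper's lifting proof needs the same interpretation (for exterior $p$ the visible region of the lifted polytope may include upper facets), so this is a caveat you share with the paper rather than an error peculiar to your proof, but as written your treatment of the exterior case is incomplete.
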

\begin{proof}
This can be seen by lifting $P$ to $\Re^3$ via the Veronese mapping, where it follows from the fact that the faces of a convex polyhedron that are  visible from any exterior point are connected.
\end{proof}

Note that by the $\eps$-net theorem, the probability of restarting
the algorithm (lines 4-5) at any call is at most a constant. Therefore it is re-started
expected at most a constant number of times, and so the expected running time, denoted by $T(n)$:
\begin{eqnarray*}
 \Ex[ T(n) ] &=& O(1/\eps\log 1/\eps)+ O(n\log 1/\eps)+ \sum_{e \in \Xi} \Ex[ T(|P_e|) ] \leq  O(n\log 1/\eps)+  \sum_{e \in \Xi} \Ex[ T(|P_e|) ] \\
\end{eqnarray*}
Similarly to previous calculations we have that 
\begin{eqnarray*}
\Ex[ T(n) ] &\leq& O(n\log 1/\eps) + \frac{\cs \cdot c_1^3(c_1+8.34)}{\eps e^{c_1}} \cdot O(3\eps n/2 \log( 3 \eps n/2))  \\
&& + \frac{\cs \cdot c_1^3((3/2)^3c_1+14.8)}{\eps e^{3c_1/2}} \cdot O(2\eps n \log( 2 \eps n)) \\
 && + \sum_{i=1} \frac{\cs \cdot c_1^3(2^{3i}c_1+3.7\cdot 2^{2i+2})}{\eps e^{c_1 2^i}} \cdot \Ex[ T(2^{i+1} \eps n ) ] \\
 && \leq d n\log n + \sum_{i=1} \frac{\cs \cdot c_1^3(2^{3i}c_1+3.7\cdot 2^{2i+2})}{\eps e^{c_1 2^i}} \cdot \Ex[ T(2^{i+1} \eps n ) ]
\end{eqnarray*}
for a constant $d$ coming from the constants above, as well as in Delaunay triangulation, point-location and list-construction computations.
Setting $\Ex[T(k)]=c k\log k$ satisfies the above inequality for $c \geq 2d$, since
\begin{eqnarray*}
 \Ex[T(n)] &\leq& d n\log n + \sum_{i=1} \frac{\cs \cdot c_1^3(2^{3i}c_1+3.7\cdot 2^{2i+2})}{\eps e^{c_1 2^i}} \cdot c (2^{i+1} \eps n )\log(2^{i+1} \eps n ) \\
 &\leq& d n\log n + (c n \log n) \sum_{i=1} \frac{2^{i+1} \cdot \cs \cdot 12^3(2^{3i}\cdot 12+3.7\cdot 2^{2i+2})}{e^{12 \cdot 2^i}}  \\
  &\leq& d n \log n + c n \log n \cdot \frac{1}{2} \leq  c n\log n \text{, \ for \ \ $c \geq 2d$}.
\end{eqnarray*}

\end{proof}
}

\section{Implementation and Experiments}
\label{sec:experiments}

In this section we present experimental results for our algorithm running on a machine equipped with an Intel Core i7 870 processor with 4 cores each running at $2.93$ GHz and with $16$ GB main memory. 
All our implementations are single threaded in order to have a fair comparison.
For nearest-neighbors and Delaunay triangulations, we use  the well-known
geometry library \textsc{CGAL}. It computes Delaunay triangulations in
expected $O(n \log n)$ time. Instead of computing centerpoints, we will recurse
for all values of $\eps'$; this results in simple efficient code,
at the cost of slightly larger constants.

\begin{figure}[!h]
  \centering
  \begin{tikzpicture}
   \begin{axis}[
         xlabel={$c_1$},
         ylabel={$\eps$-net size multiplied by $\epsilon$},
         ymin=5, ymax=30,
         restrict y to domain=5:1000,
         width=.50\textwidth,
         height=.35\textwidth,
         legend style={legend pos=north east,font=\tiny},
         legend cell align=left,
         ]
    \addplot[blue,mark=none,thick] coordinates {
		( 2, 82.6)
		( 3, 24.04)
		( 4, 11.58)
		( 5, 10.31)
		( 6, 8.2)
		( 7, 8.7)
		( 8, 8.56)
		( 9, 9.41)
		(10, 10.16)
		(11, 11.12)
		(12, 12.26)
		(13, 13)
		(14, 14.11)
		(15, 15.67)
		(16, 16.28)
	};
	\addplot[green,mark=none,thick] coordinates {
		( 2, 181.19)
		( 3, 24.04)
		( 4, 13.19)
		( 5, 12.82)
		( 6, 10.58)
		( 7, 8.36)
		( 8, 8.44)
		( 9, 9.01)
		(10, 9.97)
		(11, 10.95)
		(12, 11.76)
		(13, 12.93)
		(14, 14.2)
		(15, 15.2)
		(16, 15.75)
	};
	\addplot[red,mark=none,thick] coordinates {
		( 2, 3918)
		( 3, 32.76)
		( 4, 17.41)
		( 5, 9.35)
		( 6, 10.04)
		( 7, 8.21)
		( 8, 8.77)
		( 9, 9.27)
		(10, 9.6)
		(11, 10.71)
		(12, 12.11)
		(13, 12.84)
		(14, 13.56)
		(15, 14.39)
		(16, 16)
	};
	\addplot[cyan,mark=none,thick] coordinates {
		( 2, 209)
		( 3, 29.87)
		( 4, 13.31)
		( 5, 10.23)
		( 6, 11.3)
		( 7, 8.28)
		( 8, 8.35)
		( 9, 9.38)
		(10, 9.5)
		(11, 10.49)
		(12, 12.45)
		(13, 12.46)
		(14, 13.89)
		(15, 14.83)
		(16, 15.29)
	};
	\addplot[black,mark=none,thick] coordinates {
		( 2, 4072)
		( 3, 27.64)
		( 4, 16.48)
		( 5, 13.61)
		( 6, 11.16)
		( 7, 8.56)
		( 8, 8.66)
		( 9, 9.46)
		(10, 9.65)
		(11, 11.36)
		(12, 12.75)
		(13, 13.01)
		(14, 13.91)
		(15, 14.85)
		(16, 16.63)
	};
	\addplot[yellow,mark=none,thick] coordinates {
    	( 2, 7939)
		( 3, 34.22)
		( 4, 17.97)
		( 5, 12.04)
		( 6, 7.58)
		( 7, 8.23)
		( 8, 8.43)
		( 9, 9.61)
		(10, 10.74)
		(11, 11.15)
		(12, 11.98)
		(13, 12.84)
		(14, 13.87)
		(15, 15.58)
		(16, 16.27)
	};
	\legend{MOPSI,KDDCU,Europe,Uniform,Gauss9,Birch3}
   \end{axis}
  \end{tikzpicture}
  \vspace{-0.3cm}
  \caption{$\epsilon$-net size multiplied by $\epsilon$ for $4$ sets, $\epsilon=0.01$.}
  \label{fig:epsnet_c} 
\end{figure}
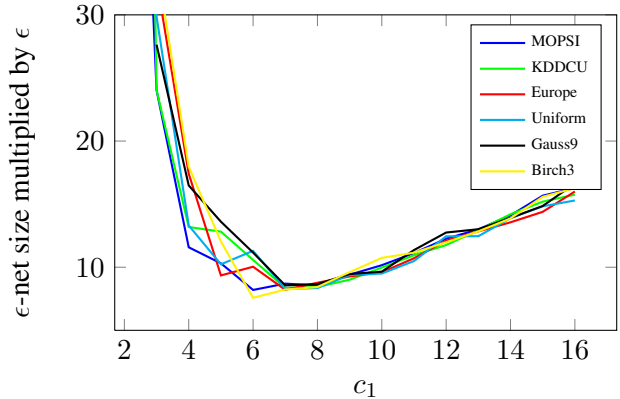

In order to empirically validate the size of the $\epsilon$-net obtained by our random sampling algorithm we have utilized several datasets in \cite{dataset}. 
The \emph{MOPSI Finland} dataset contains  $13467$ locations of users in Finland. 
The \emph{KDDCUP04Bio} dataset contains the first $2$ dimensions of a protein dataset  with $145,751$ entries. 
The \emph{Europe} and \emph{Birch3}  datasets  have $169,308$ and $100,000$ entries respectively.
We have created two random data sets \emph{Uniform} and \emph{Gauss9} with $50,000$ and $90,000$ points. 
The former is sampled from a uniform distribution while the latter is sampled from $9$ different gaussian distributions whose means and covariance matrices are randomly generated. 
Setting the probability for random sampling to $\frac{12}{\epsilon\cdot n}$ results in approximately $\frac{12}{\epsilon}$ sized nets for nearly all datasets, as expected
by our analysis. We note however, that in practice setting $c_1$ to $7$ gives smaller size $\eps$-nets, of size around $\frac{9}{\eps}$. See Figure~\ref{fig:epsnet_c} for the dependency of the net size on $c_1$ while setting $\epsilon$ to $0.01$. In Table~\ref{tab:size} we list $\epsilon$-net sizes for different values of $\epsilon$ while setting $c_1$ to $12$.

\begin{table}[!h]
 \centering
   \begin{tabular}{|c|r|r|r|r|}
	\hline 
	Dataset &  \multicolumn{4}{c|}{$\epsilon$-net size}  \\ 
	\hline 
	              & $\epsilon=0.2$   & $\epsilon=0.1$  & $\epsilon=0.01$  & $\epsilon=0.001$   \\ 
    \hline
    \emph{MOPSI Finland} & $83$  & $128$  & $1226$ & $12011$   \\ 
	\hline 
    \emph{KDDCUP04Bio}   & $55$  & $118$ & $1176$  & $11902$  \\ 
	\hline 
    \emph{Europe}        & $69$  & $119$  & $1205$ & $12043$  \\ 
	\hline 
    \emph{Birch 3}       & $58$  & $125$  & $1198$ & $11878$   \\ 
	\hline
    \emph{Uniform}       & $70$  & $109$  & $1245$ & $12034$   \\ 
	\hline
    \emph{Gauss9}       & $58$  & $120$  & $1275$ & $12011$   \\ 
	\hline
	\end{tabular} 
 \caption{ $\epsilon$-net sizes for various point sets, $c_1=12$.}
 \label{tab:size}
\end{table}

\section{Conclusion}

In this paper we have improved upon the constants in the previous construction
of $\eps$-nets for disks in the plane. Our method gives an efficient
practical algorithm for computing such $\eps$-nets, which we have implemented
and tested on a variety of data-sets. We conclude with a list of open problems:
\begin{itemize}
\item Currently the best known lower-bound is the $2/\eps$ bound for halfspaces
in $\Re^2$. It remains an interesting question to improve this lower-bound, or
improve the upper-bounds given in this paper.
\item Currently the algorithm of Agarwal and Pan~\cite{AP14} uses a number
of heavy tools (dynamic range reporting, dynamic approximate range counting)
that hinders an efficient and practical implementation of their algorithm. It would
be considerable progress to derive a more practical method with provable guarantees.
\end{itemize}

\newpage

\bibliographystyle{plain}
\bibliography{hittingsetepsnets}{}

\end{document}